\documentclass[12pt,a4paper,fleqn]{article}

\usepackage{a4wide}
\usepackage{amsmath,amssymb,amsthm}
\theoremstyle{plain}% actually the default style
\newtheorem{theorem}{Theorem}[section]
\newtheorem*{theorem*}{Theorem}

\newtheorem{lemma}[theorem]{Lemma}
\newtheorem{proposition}[theorem]{Proposition}

\theoremstyle{definition}
\newtheorem{definition}[theorem]{Definition}
\newtheorem{example}[theorem]{Example}

\newtheorem*{note*}{Note}
\newtheorem*{notes*}{Notes}
\newtheorem*{null*}{}
\newtheorem{remark}[theorem]{Remark}

\usepackage{agda}
\renewcommand{\AgdaFontStyle}[1]{\texttt{#1}}
\renewcommand{\AgdaKeywordFontStyle}[1]{\texttt{#1}}
\renewcommand{\AgdaBoundFontStyle}[1]{\emph{\texttt{#1}}}
\AgdaNoSpaceAroundCode{}

\usepackage{newunicodechar}

\usepackage[no-weekday]{eukdate}

\usepackage[pdftex]{graphics}

\usepackage[pdftex,pagebackref=false,colorlinks,linkcolor=blue,citecolor=blue]{hyperref}

\usepackage{ifthen}

\usepackage{mathpartir}% Remy's math paragraph & rule macros

\usepackage[numbers]{natbib}

\usepackage{setspace}

\usepackage{stmaryrd}

\usepackage{url}

\usepackage[all,pdf]{xy}
\CompileMatrices % xy-pic compiles diagrams
\usepackage[utf8]{inputenc}
\usepackage[T1]{fontenc}
\usepackage[tt=false, type1=true]{libertine}
\usepackage[libertine]{newtxmath}% Linux Libertine fonts
\usepackage[sans]{dsfont}% for \mathds
\usepackage[mathscr]{eucal}% for \mathscr

\newunicodechar{ℕ}{\ensuremath{\mathds{N}}}
\newunicodechar{𝔸}{\ensuremath{\mathds{A}}}
\newunicodechar{λ}{\ensuremath{\mathnormal{}\lambda}}
\newunicodechar{Γ}{\ensuremath{\mathnormal{}\Gamma}}
\newunicodechar{Σ}{\ensuremath{\mathnormal{}\Sigma}}
\newunicodechar{ν}{\ensuremath{\mathnormal{}\nu}}
\newunicodechar{∅}{\ensuremath{\emptyset}}
\newunicodechar{→}{\ensuremath{\mathnormal\shortrightarrow}}
\newunicodechar{≡}{\ensuremath{\mathnormal\equiv}}
\newunicodechar{∀}{\ensuremath{\mathnormal\forall}}
\newunicodechar{И}{\ensuremath{\mathnormal\New}}
\newunicodechar{∑}{\ensuremath{\mathnormal\textstyle\sum}}
\newunicodechar{×}{\ensuremath{\mathnormal{\times}}}
\newunicodechar{∪}{\ensuremath{\mathnormal\cup}} 
\newunicodechar{′}{\ensuremath{\mathnormal{}'}}
\newunicodechar{₀}{\ensuremath{{}_0}}
\newunicodechar{₁}{\ensuremath{{}_1}}
\newunicodechar{₂}{\ensuremath{\mathnormal{}_2}}
\newunicodechar{₃}{\ensuremath{\mathnormal{}_3}}
\newunicodechar{₄}{\ensuremath{\mathnormal{}_4}}
\newunicodechar{₅}{\ensuremath{\mathnormal{}_5}}
\newunicodechar{₆}{\ensuremath{\mathnormal{}_6}}
\newunicodechar{₇}{\ensuremath{\mathnormal{}_7}}
\newunicodechar{₈}{\ensuremath{\mathnormal{}_8}}
\newunicodechar{₉}{\ensuremath{\mathnormal{}_9}}
\newunicodechar{₊}{\ensuremath{{}_{+}}}
\newunicodechar{≐}{\ensuremath{\mathnormal\doteq}}
\newunicodechar{≠}{\ensuremath{\mathnormal\not=}}
\newunicodechar{≢}{\ensuremath{\mathnormal\not\equiv}}
\newunicodechar{∉}{\ensuremath{\mathnormal\notin}}
\newunicodechar{∈}{\ensuremath{\mathnormal\in}}
\newunicodechar{∶}{\ensuremath{\mathnormal :}}
\newunicodechar{⦂}{\ensuremath{\mathnormal :}}
\newunicodechar{⨟}{\ensuremath{\mathnormal ;}}
\newunicodechar{∙}{\ensuremath{\mathnormal\cdot}}
\newunicodechar{≻}{\ensuremath{\mathnormal\succ}}
\newunicodechar{≥}{\ensuremath{\mathnormal\geq}}
\newunicodechar{∼}{\ensuremath{\mathnormal\sim}} 
\newunicodechar{⊢}{\ensuremath{\mathnormal\vdash}}
\newunicodechar{｛}{\ensuremath{\mathnormal\{}}
\newunicodechar{｝}{\ensuremath{\mathnormal\}}}
\newunicodechar{‿}{\ensuremath{\mathnormal\smallsmile}}
\newunicodechar{≔}{\ensuremath{\mathnormal{:}{=}}}
\newunicodechar{∣}{\ensuremath{\mathnormal |}}
\newunicodechar{‼}{\ensuremath{\mathnormal !\;!}}
\newunicodechar{∞}{\ensuremath{\mathnormal\infty}}
\newunicodechar{𝐚}{\ensuremath{\mathnormal\mathbf{a}}}
\newunicodechar{𝐜}{\ensuremath{\mathnormal\mathbf{c}}}
\newunicodechar{𝐞}{\ensuremath{\mathnormal\mathbf{e}}}
\newunicodechar{𝐢}{\ensuremath{\mathnormal\mathbf{i}}}
\newunicodechar{𝐧}{\ensuremath{\mathnormal\mathbf{n}}}
\newunicodechar{𝐍}{\ensuremath{\mathnormal\mathbf{N}}}
\newunicodechar{𝐨}{\ensuremath{\mathnormal\mathbf{o}}}
\newunicodechar{𝐫}{\ensuremath{\mathnormal\mathbf{r}}}
\newunicodechar{𝐬}{\ensuremath{\mathnormal\mathbf{s}}}
\newunicodechar{𝐭}{\ensuremath{\mathnormal\mathbf{t}}}
\newunicodechar{𝐮}{\ensuremath{\mathnormal\mathbf{u}}}
\newunicodechar{𝐯}{\ensuremath{\mathnormal\mathbf{v}}}
\newunicodechar{𝐳}{\ensuremath{\mathnormal\mathbf{z}}}
\newunicodechar{𝐎}{\ensuremath{\mathnormal\mathbf{0}}}
\newunicodechar{⟦}{\ensuremath{\mathnormal\llbracket}}
\newunicodechar{⟧}{\ensuremath{\mathnormal\rrbracket}}
\newunicodechar{∼}{\ensuremath{\mathnormal\sim}}
\newunicodechar{⁻}{\ensuremath{\mathnormal{}^{-}}}

\newcommand{\abs}[1]{\mathop{#1.}}
\newcommand{\aeq}{\sim}
\newcommand{\aeqa}{\sim^{\mathrm{a}}}
\newcommand{\aeqb}{\sim^{\mathrm{b}}}
\newcommand{\agdadef}[3]{\href{\agdaroot#1.html\##2}{\texttt{#3}}}
\newcommand{\agdalink}[1]{\href{\agdaroot#1.html}{\texttt{#1}}}
\newcommand{\agdalinkalt}[2]{\href{\agdaroot#1.html}{\texttt{#2}}}
\newcommand{\agdaroot}{https://amp12.github.io/WSLN/html/}
\newcommand{\arity}{\mathop{\mathtt{ar}}}
\newcommand{\Arg}[1]{\mathtt{Arg}[#1]}
\newcommand{\atm}{\mathbf{a}\,}
\newcommand{\Atom}{\mathds{A}}

\newcommand{\bndNil}{\langle\rangle}
\newcommand{\bndPair}[2]{\langle#1\mathbin{,}#2\rangle}
\newcommand{\close}{\mathbin{\leftsquigarrow}}
\newcommand{\centeredcirc}[1]{\vcenter{\hbox{$#1\circ$}}}
\newcommand{\smallcirc}{
  \mathbin{\mathchoice{
      \centeredcirc\scriptstyle}{
      \centeredcirc\scriptstyle}{
      \centeredcirc\scriptscriptstyle}{
      \centeredcirc\scriptscriptstyle}}}
\newcommand{\comp}{\smallcirc}
\newcommand{\conc}[1]{[#1]}

\newcommand{\CONS}{\mathbin{{:}{:}}}

\newcommand{\den}[1]{\llbracket#1\rrbracket}
\newcommand{\denb}[1]{\llbracket#1\rrbracket^{\mathrm{b}}}
\newcommand{\dens}[1]{\llbracket#1\rrbracket^{\mathrm{s}}}

\newcommand{\Empty}{\emptyset}
\newcommand{\ent}{\vdash}
\newcommand{\eq}{\equiv}
\newcommand{\Fin}{\mathop{\mathtt{Fin}}}
\newcommand{\freshfor}{\mathrel{\#}}
\newcommand{\FsetA}{\mathtt{Fset}\Atom}
\newcommand{\fun}{\shortrightarrow}

\newcommand{\idx}{\mathbf{i}\,}

\newcommand{\ins}[2]{\mathtt{insert}\,#1\,#2}

\newcommand{\List}{\mathop{\mathtt{List}}}

\newcommand{\Nat}{\mathds{N}}
\newcommand{\new}{\mathtt{new}}
\newcommand{\newNotIn}{\new{\notin}}
\newcommand{\New}{\reflectbox{$\mathsf{N}$}}
\newcommand{\NIL}{\mathtt{[]}}
\newcommand{\NomArg}{\mathtt{NomArg}}
\newcommand{\NomBnd}{\mathtt{NomBind}}
\newcommand{\NomTrm}{\mathtt{NomTrm}}
\newcommand{\Op}{\mathop{\mathtt{Op}}}
\newcommand{\op}{\mathit{op}}
\newcommand{\oprn}{\mathbf{o}\,}
\newcommand{\open}{\mathbin{\rightsquigarrow}}
\newcommand{\rem}[2]{\mathtt{remove}\,#1\,#2}
\newcommand{\subst}{\ast}
\newcommand{\Set}{\mathtt{Set}}
\newcommand{\SetOne}{\Set_1}
\newcommand{\size}{\mathtt{size}}
\newcommand{\ssubst}{\mathbin{{:}{=}}}
\newcommand{\suc}{\mathop{\mathtt{suc}}}
\newcommand{\suchat}[1]{\mathtt{suc}^{#1}\,}
\newcommand{\supp}{\mathop{\mathtt{supp}}}
\newcommand{\toN}{\mathtt{to}\Nat\,}
\newcommand{\Trm}[1]{\mathtt{Trm}[#1]}
\newcommand{\TRM}{\mathtt{Trm}}

\newcommand{\wkn}{\mathbin{\smallsmile}}
\newcommand{\zero}{\mathtt{zero}}

\title{Well-Scoped Locally Nameless\\
  Representation of Syntax}

\author{Andrew Pitts\\
  University of Cambridge}

\date{} % \texttt{\today{}}}

\begin{document}

\maketitle

\begin{abstract}
  When using interactive theorem provers based on dependent type
  theory to define and reason about languages involving binding
  constructs, we advocate the use of a well-scoped version of the
  locally nameless method of representing syntax. This paper describes
  generic code parameterized by a Plotkin-style binding signature for
  this style of syntax representation within the Agda theorem prover,
  gives a proof of its adequacy with respect to na{\"i}ve nameful
  syntax modulo $\alpha$-conversion and discusses some examples of its
  use.
\end{abstract}

%%%%%%%%%%%%%%%%%%%%%%%%%%%%%%%%%%%%%%%%%%%%%%%%%%%%%%%%%%%%%%%%%%%%%%
\section{Introduction}

This paper is aimed at people who wish to define languages and logics
and prove properties about their syntax and semantics fully formally
within an interactive theorem prover. What we present relies upon the
use of dependent types, so we restrict attention to provers providing
those; Agda~\cite{agda}, Lean~\cite{lean} and Rocq~\cite{rocq} are
currently popular examples of such provers. We only need quite basic
features of intensional dependent type
theory~\cite{Martin-LoefP:intttp,NordstromB:progmlt} and here make use
of a subset of the features provided by Agda (see
section~\ref{sec:agda}); it should be straightforward to adapt the
development to Lean or Rocq.

We assume that the language to be formalized with the theorem
prover involves constructs that bind to lexical scopes named resources
(logical variables, storage locations, communication channels, and so
on).  Which formalization technique should one choose to deal with
that within the theorem prover? The \textsc{PoplMark}
challenge~\cite{PierceBP:mecmmp} proposed three criteria for
evaluating techniques: \emph{infrastructure overhead} (formalization
overheads, such as additional operations and their associated proof
obligations, should not be prohibitive for large developments),
\emph{transparency} (formal definitions and theorems should not depart
radically from the usual informal conventions familiar to a technical
audience) and \emph{cost of entry} (the formalization technique should
be usable by someone who is not an expert in theorem prover
technology).

Here we will be concerned with the \emph{locally nameless}
representation of syntax involving binding, which dates back to its
mention in the conclusion of~\cite{deBruijnNG:lamcnn}. It uses
explicit names for unbound (free) resources and de~Bruijn indices for
bound ones. For example, the $\alpha$-equivalence class of the untyped
$\lambda$-calculus term $\lambda x.\,\lambda y.\, x(y\,z)$ is
represented by the locally nameless term
$\lambda\,\lambda\,1(0\,z)$. \citet{PierceB:engfm} and
\citet{ChargueraudA:locnr} discuss the history of this form of
representation (they attribute its earliest use in fully formalized
metatheory to Gordon~\cite{GordonAD:mecncs}) and its relation to other
forms. They make a strong case that the locally nameless technique
satisfies all three \textsc{PoplMark} criteria if it is combined with
the use cofinite quantification when introducing fresh names in
inductive definitions.  Its \emph{cost of entry} (in the
\textsc{PoplMark} sense) is certainly lower than the more popular
completely nameless representation, where free names are represented
by ``dangling'' de~Bruijn indices, that is, ones that are greater than
the highest level of binding within the term. Using de~Bruijn indices
in this way imposes a spurious order on free names: indices are
totally ordered and have to be carefully shifted up or down as the
context-of-use of the term changes during computation. Whereas all one
should care about a free name is whether it is or is not equal to some
other names and its identity is invariant under changing its context
of use.

The inductively defined collection of locally nameless terms also
permits dangling indices and one has to cut down to an inductively
defined subset of \emph{locally closed}
terms~\cite[Fig.~1]{PierceB:engfm} to get the things that correctly
represent $\alpha$-equivalence classes of fully nameful terms. This
creates a trap for the unwary. For example, the locally nameless
version of capture-avoiding substitution of terms for free
variables~\cite[Fig.~2]{PierceB:engfm} has a beautifully simple,
structurally recursive definition; but one has to remember that it is
only correct if the term being substituted is locally closed. In
general the locally closed property has to be established many times
in a locally nameless development. For example, when giving an
inductive definition of a type system via a collection of rules, one
has to prove that each rule preserves the locally closed property from
hypotheses to conclusion.

This motivates the step we take of using the theorem prover's
dependent types to enforce local closure as part of the very
definition of locally nameless terms. In section~\ref{sec:welsln} we
give an indexed inductive definition of types of \emph{$n$-terms}, for
each natural number $n:\Nat$, that contain locally nameless terms
which are \emph{well scoped} in the sense of only featuring de~Bruijn
indices drawn from the type $\Fin n$ of natural numbers less than
$n$. Then the $0$-terms play the role of locally closed terms and
correctly represent $\alpha$-equivalence classes of nameful terms (see
section~\ref{sec:ade}). Thus we replace proofs of local closure by
typing declarations.\footnote{Taming dangling indices via dependent
  types applies just as well to the fully nameless representation of
  syntax with binding as it does to the locally nameless
  approach. In~\cite{AbelA:decctt} (whose accompanying Agda code has
  become an important inspiration for fully formalized proofs of
  decidability of conversion in various dependent type theories), the
  authors say on page 23:4 ``\emph{the formalization does not enforce
    well-scopedness of expressions, instead we rely on the typing
    judgments to implicitly guarantee well-scopedness. In practice
    this has allowed for some mistakes when formalizing the typing
    rules, which we had to go back and correct, so intrinsically
    well-scoped syntax might have been a better choice}''; later
  developments based on their code convert to a well-scoped
  version~\cite{AbelA:gramdt}.}

This straightforward idea is mentioned by the authors of
\cite{PierceB:engfm} (Section~3.3, Footnote~7), who do not follow it
up because they want their development to apply also to theorem
provers that are not based on dependent type theory, such as
Isabelle/HOL~\cite{NipkowT:isahpa}.  However, the implementation of
the idea is not so straightforward. In the conventional locally
nameless representation, de~Bruijn indices are just natural numbers
and the crucial operations of \emph{opening} (replacing an index with
a term) and \emph{closing} (replacing a free name with an index) have
simple, structurally inductive definitions~\cite[Figs~1 and
3]{PierceB:engfm}. In the well-scoped version we have to work a bit
harder since indices are elements of the types $\Fin n$ of finite
ordinals. Opening now involves a function $\Fin(1+n)\fun\Fin n$ that
both removes $i:\Fin(1+n)$ and shifts indices greater than it down by
one; and closing involves the inverse of that. Fortunately we are able
to parameterize our development by an arbitrary Plotkin binding
signature~\cite{PlotkinGD:illtr,FioreMP:abssvb} and the work of
implementing the well-scoped version only has to be done once, by the
library, provided the language being formalized can be specified by
such a signature (and it seems that very many can be). Indeed, in our
experience (see section~\ref{sec:exa}) one mostly needs the opening
and closing operations for a few small values of $n:\Nat$. In case
$n=0$, closing an atomic name $x$ with the index $0:\Fin\,1$ sends a
$0$-term $t$ to a $1$-term called its \emph{abstraction} by $x$ and
written $\abs{x}t$; and opening by replacing index $0$ with a $0$-term
$u$ sends a $1$-term $t$ to a $0$-term called its \emph{concretion} at
$u$ and written $t\conc{u}$. Abstraction, concretion, substitution
$(x \ssubst u)\subst t$ and \emph{freshness} $x \freshfor t$ (meaning
$x$ does not occur in $t$) have the expected relationships, see
Lemma~\ref{lem:conc-abs}, and quite often that is all one needs to
know. (Our choice of abstraction/concretion/freshness terminology
reflects the close relationship between the locally nameless
representation and nominal
techniques~\cite{PittsAM:nomsns,GabbayMJ:fountl}; see
\cite{PittsAM:locns}.)

The \emph{Well-Scoped Locally Nameless} library \agdalink{WSLN} for
Agda is available at~\cite{welsln-agda} and we give links like
\agdalinkalt{README}{this} to relevant parts of it (converted to html)
throughout this paper. The next section describes the limited set of
features of Agda that it uses. In section~\ref{sec:ato} we explain the
library's use of numbers and finite ordinals for names (atoms) and
scoped de~Bruijn indices respectively.  Section~\ref{sec:welsln}
defines what are the well-scoped locally nameless terms over a given
Plotkin binding signature and describes the operations on them
provided by the library: scope weakening, freshness, substitution,
concretion, abstraction and size. Section~\ref{sec:ade} establishes a
bijection between such terms and $\alpha$-equivalence classes of
``nameful'' terms, under which substitution corresponds to the usual
notion of capture-avoiding substitution for nameful terms
(Propositions~\ref{prop:bij} and \ref{prop:correct}). Finally, in
section~\ref{sec:exa} we give three examples of the library in use,
allowing us to make various points about the \emph{pros} and
\emph{cons} of the well-scoped locally nameless representation of
syntax.

%%%%%%%%%%%%%%%%%%%%%%%%%%%%%%%%%%%%%%%%%%%%%%%%%%%%%%%%%%%%%%%%%%%%%%
\section{Agda}
\label{sec:agda}

We assume some familiarity with intensional dependent type theory as
provided by the Agda interactive theorem prover~\cite{agda}.  We use
Agda's \verb+--safe+ option to switch off possibly inconsistent
experimental features and also its \verb+--without-K+ option since we
do not need uniqueness of identity proofs at all types. In fact we
only need the following very simple form of intensional dependent
type theory, a proper subset of what safe Agda provides:

We use the first two of Agda's hierarchy of universe types,
$\Set : \SetOne$. They are closed under forming dependent function
types and record types (both satisfying the $\eta$-rule for
conversion) and inductive definitions. Agda permits very expressive
forms of the latter (such as inductive-recursive definitions); but we
only need mutual, indexed inductive datatypes and in particular
identity types (written $a \eq b$, where $a$ and $b$ have the same
(implicit) type). We do rely heavily upon Agda's version of dependent
pattern matching~\cite{CoquandT:patmdt}, taking on trust that, in
principle~\cite{GoguenH:elidpm}, it's use could be replaced by
eliminators.\footnote{A translation of our Agda library to
  Lean~\cite{lean} would do that replacement, because of its kernel;
  but would introduce other features, such as classical choice, of
  which we have no need.} Agda's implementation of dependent pattern
matching with the \verb+--without-K+ option rules out uniqueness of
identity proofs for arbitrary types~\cite{CockxJ:patmwk}.  Fortunately
we only need them at types with decidable equality, where we can
appeal to Hedberg's Theorem~\cite{HedbergM:cohtml}. We also have no
need to assume that arbitrary functions are extensional with respect
to identity types; so we avoid any use of Agda's irrelevancy
annotations, since they introduce a small amount of extensionality by
the back door.\footnote{With irrelevancy annotations one can prove that
  any two functions $A \fun \bot$ are equal;
  see~\cite[\texttt{/src/Data/Empty.agda}]{agdaLib}.} For that reason
we avoid use of Agda's Standard Library~\cite{agdaLib} and instead
collect together a~\agdalink{Prelude} containing a few standard
definitions that we need.

%%%%%%%%%%%%%%%%%%%%%%%%%%%%%%%%%%%%%%%%%%%%%%%%%%%%%%%%%%%%%%%%%%%%%%
\section{Atoms and indices}
\label{sec:ato}

We will use elements of a type $\Atom:\Set$, called \emph{atoms},
to name free resources, that is, ones that are not
bound to any scope. We have two requirements for $\Atom$:
\begin{enumerate}
\item It has decidable equality, which is to say that there is a function
  that takes in two atoms $a\;b : \Atom$ and returns either an element
  of $a \eq b$ or a function ${a\eq b}\fun \Empty$ (where $\Empty$ is
  the empty type, a datatype with no constructors).

\item It is \emph{finitely inexhaustible}. Informally this means that
  for every finite set of atoms there is an atom not equal to any of
  them. To make that precise we have to represent finite subsets of
  $\Atom$. The operations on such subsets that we need most often are
  empty set, singletons and unions, so it is helpful (for
  pattern-matching) to make those the constructors of an inductive
  datatype $\FsetA : \Set$ with constructors
  \begin{align*}
    \Empty &:\FsetA &&\text{(empty set)}\\
    \{\_\} &: \Atom\fun\FsetA &&\text{(singleton set)}\\
    \_{\cup}\_ &: \FsetA \fun\FsetA \fun\FsetA &&\text{(binary union)}\\
  \intertext{Values of $\FsetA$ give a many-one representation of finite
  sets of atoms and we need to be able to define functions}
    \new &:\FsetA\fun\Atom\\
    \newNotIn &: (\mathit{xs}:\FsetA)\fun \new\,\mathit{xs} \notin
                \mathit{xs}
  \end{align*}
  where $\_{\notin}\_ : \Atom \fun \FsetA \fun\Set$ is a parameterized
  inductive datatype expressing non-membership, which can be defined
  in a straightforward way.
\end{enumerate}
We meet these two requirements by taking $\Atom$ simply to be the
datatype $\Nat$ of natural numbers (with constructors $0: \Nat$ and
$1{+}\_:\Nat\fun\Nat$), and $\new$ to be the function taking a finite
set of numbers to one more than their
maximum; see~\agdalinkalt{WSLN.Atom}{Atom}.

Atoms name free resources, but bound ones are pointed to by de~Bruijn
\emph{indices}. We use ones that are well-scoped in the sense that the
types $\Fin n$ of indices are indexed by natural numbers $n:\Nat$ that
tell us the maximum size of any index currently in scope. Here $\Fin$
is the usual indexed inductive datatype of finite ordinals, with
constructors $\zero : \Fin (1 + n)$ and
$\suc: \Fin n \fun \Fin (1 + n)$; see~\agdalinkalt{WSLN.Index}{Index}.

%%%%%%%%%%%%%%%%%%%%%%%%%%%%%%%%%%%%%%%%%%%%%%%%%%%%%%%%%%%%%%%%%%%%%%
\section{Well-scoped locally nameless terms}
\label{sec:welsln}

The Well-Scoped Locally Nameless library is parameterized by a binding
signature in the sense of
Plotkin~\cite{PlotkinGD:illtr,FioreMP:abssvb}, a generalization of the
usual notion of algebraic signature:

\begin{definition}[\agdalinkalt{WSLN.Sig.Sig}{Sig}]
  \label{def:sig}
  A \emph{binding signature} $\Sigma$ is given by a type
  $\Op\Sigma : \Set$ whose elements are called \emph{operators}, and a
  function $\arity\Sigma : \Op\Sigma \fun \List\Nat$ assigning a list
  of numbers to each operator.  Given $\op: \Op\Sigma$, the length of
  the list $\arity\Sigma\,\op$ is called the \emph{arity} of the
  operator $\op$ and specifies how many arguments it takes; and each
  number in the list $\arity\Sigma\,\op$ specifies how many names are
  bound in that argument place. For example, a binding signature for
  the untyped $\lambda$-calculus (\agdalink{Lambda}) has two operators
  $\AgdaInductiveConstructor{′ap′}$ (for building applications) and
  $\AgdaInductiveConstructor{′lm′}$ (for building
  $\lambda$-abstractions); $\arity\Sigma$ maps
  $\AgdaInductiveConstructor{′ap′}$ to the list $[0,0]$ (since
  application takes two arguments, neither of which involves binding)
  and maps $\AgdaInductiveConstructor{′lm′}$ to the list $[1]$ (since
  $\lambda$-abstraction takes one argument and is a unary
  binder). Section~\ref{sec:exa} contains some other examples of
  binding signatures.
\end{definition}

\textbf{The definitions in the rest of this section are
  implicitly\footnote{The \agdalink{WSLN} library makes use of Agda's
    instance arguments, enclosed by double curly braces
    $\{\!\{\_\}\!\}$, for this parameterization; these are a kind of
    implicit argument that gets solved by a special \emph{instance
      resolution} algorithm, rather than by the unification algorithm
    used for normal implicit arguments, enclosed by single curly
    braces $\{\_\}$.}  parameterized by a given binding signature
  $\Sigma$.}

\begin{definition}[\agdalinkalt{WSLN.Sig.Term}{Term}]\label{def:ter}
  The type $\TRM:\Set$ of \emph{well-scoped locally nameless terms}
  over $\Sigma$ is defined as follows. $\TRM = \Trm{0}$, where the
  mutually inductive datatypes $\Trm{n}:\Set$ of \emph{$n$-terms} and
  $\Arg{n}:\List\Nat\fun\Set$ of \emph{$n$-argument lists}  are
  indexed by $n:\Nat$ and have constructors
  \begin{align*}
    {\idx} &: \Fin n \fun \Trm{n}\\
    {\atm} &: \Atom \fun \Trm{n}\\
    {\oprn} &: (\op : \Op\Sigma) \fun \Arg{n} (\arity\Sigma\,\op) \fun
                \Trm{n}\\
    \NIL &: \Arg{n}\,\NIL\\
    \_{\CONS}\_ &: \{m : \Nat\}\{\mathit{ms} : \List\Nat\} \fun \Trm{m +
                  n} \fun \Arg{n}\,\mathit{ms} \fun \Arg{n}\,(m \CONS
                  \mathit{ms})
  \end{align*}
  ($\NIL$ and $\_{\CONS}\_$ also denote the usual constructors for
  finite lists).
\end{definition}
Thus an $n$-term is either and index $\idx i$ with $i:\Fin n$, or an
atomic name $\atm x$ with $x:\Atom$, or a compound term
$\oprn\op\,[t_1,\ldots,t_k]$ where $k$ is the length of the list
$\arity\Sigma\,\op : \List\Nat$, say
$\arity\Sigma\,\op = [m_1,\ldots,m_k]$, and for each $i=1,\ldots,k$,
$t_i$ is an $(m_i + n)$-term.

For example, for the binding signature for the untyped
$\lambda$-calculus mentioned in Definition~\ref{def:sig}, the $0$-term
corresponding to $\lambda x.\,\lambda y.\, x(y\,z)$ is
\[
  \oprn\AgdaInductiveConstructor{′lm′}\,(
   \oprn\AgdaInductiveConstructor{′lm′}\,(
     \oprn\AgdaInductiveConstructor{′ap′}\,
        (\idx(\suc\,\zero) \CONS
        \oprn\AgdaInductiveConstructor{′ap′}\,(
        \idx\zero\CONS\atm z\CONS\NIL)
       \CONS\NIL)
   \CONS\NIL)
 \CONS\NIL)
\]
Evidently the normal forms of $\TRM$ are not very readable; but one
can introduce helpful abbreviations for indices and for the operators
of the given signature. For example, in \agdalink{Lambda} the above
$0$-term is the value of the term \agdadef{Lambda}{ex'}{ex$'$} =
$\lambda(\lambda(\mathtt{i1}\cdot(\mathtt{i0}\cdot \atm z)))$.

\begin{definition}[\textbf{Scope weakening, support and freshness}]
  \label{def:weasf}
  Given $m,n : \Nat$ with $m \leq n$, there is a function
  $\Fin m \fun \Fin n$ that injects $\Fin m$ as an initial order
  segment of $\Fin n$. This induces by structural recursion a function
  \agdadef{WSLN.Index}{actFin}{Index.actFin} of type
  $\Trm{m}\fun\Trm{n}$ whose effect on an $m$-term $t:\Trm{m}$ we
  write as $t\wkn n$ and call the $n$-term obtained by \emph{scope
    weakening}. These functions satisfy unitary and associative
  properties that make $\Trm{\_}$ into a presheaf on $(\Nat, \geq)$;
  see~\agdalinkalt{WSLN.Sig.Term}{Term}.

  By contrast, weakening $m$-terms with respect to atoms rather than
  indices is an invisible operation, because the finite set of atoms
  currently in context is left implicit in the type of such terms; this is
  one of the strengths of the locally nameless
  representation. Adopting terminology from the theory of nominal
  sets~\cite{PittsAM:nomsns}, we write
  $\supp t : \FsetA$ for the finite set of atoms occurring in
  $t : \Trm{n}$ and call it the \emph{support} of the $n$-term $t$. We
  write $x \freshfor t$ for the type of proofs that an atom $x$ does
  not occur in $\supp t$, in which case we say that $x$ is
  \emph{fresh} for $t$; see \agdalinkalt{WSLN.Fresh}{Fresh}.
\end{definition}

\begin{definition}[\textbf{Substitution} and \textbf{renaming}]
  \label{def:sub}
  We take term-for-name substitutions to be arbitrary functions
  $\Atom\fun\TRM$ mapping names to $0$-terms.\footnote{As several
    authors have noted, using functions rather than, for example,
    finite lists of (atom,term)-pairs, conveniently makes the unitary
    and associative properties of composition of substitutions hold up
    to conversion rather than just propositional equality; we just
    have to be careful to avoid properties that depend upon
    extensional equality of such functions.} Since $\Atom$ has
  decidable equality, single substitution is a special case: given
  $x:\Atom$ and $u:\TRM$, the function
  \[
    (x\ssubst u): \Atom\fun\TRM
  \]
  maps $x$ to $u$ and every other name $y$ to the corresponding atom
  $\atm y$. Applying a substitution $\sigma : \Atom\fun\TRM$ to an
  $n$-term $t : \Trm{n}$, we get another $n$-term, denoted
  $\sigma\subst t$. This is defined by a straightforward structural
  recursion on $t$ (see
  \agdalinkalt{WSLN.Sig.Substitution}{Substitution}), using the scope
  weakening operation on terms (Definition~\ref{def:weasf}) for the
  case when $t$ is an atom :
  \begin{align*}
    \sigma \subst(\idx i)
    = \; &\idx i\\
    \sigma\subst(\atm x)
    = \; &(\sigma\,x)\wkn n\\
    \sigma\subst(\oprn\op\,[t_1,\ldots,t_k])
    = \; &\oprn\op\,[\sigma\subst t_1,\ldots, \sigma\subst t_k]\\
         &\text{where $\arity\Sigma\,\op = [m_1,\ldots,m_k]$, say.}
  \end{align*}
  For some applications one needs name-for-name \emph{renaming} of
  terms, which we can take to be a special case of substitution: given
  $\rho:\Atom\fun\Atom$ and $t: \Trm{n}$, we define
  $\rho\subst t : \Trm{n}$ to be $(\mathbf{a}\comp\rho)\subst t$ where
  $\mathbf{a}\comp\rho$ is the substitution mapping each $x:\Atom$ to
  $\atm(\rho\, x):\TRM$. A single renaming $(x\ssubst y)\subst t$ is
  defined to be substitution of $\atm{y}$ for $x$,
  $(x\ssubst\atm{y})\subst t$.
\end{definition}

The simplicity of the definition of substitution is one of the selling
points of the locally nameless representation. The only complication
for the well-scoped version we are considering here is the need for
scope weakening when $t$ is an atom. On the other hand, by typing
substitutions as functions valued in $\TRM = \Trm{0}$ we are making
automatic the ``locally closed'' condition that is needed for this
simple form of substitution to correctly represent capture-avoiding
substitution on na{\"i}ve, nameful terms. We prove this correctness
property in section~\ref{sec:ade}. To do that we first need to
consider well-scoped versions of the operations of concretion
(opening) and abstraction (closing) for locally nameless terms.

Concretion is the $i = \zero$ case of the \emph{opening}
operation~\cite[Figure~1]{PierceB:engfm} which, roughly speaking,
replaces occurrences of an index $i$ in an $(1 + n)$-term by a
$0$-term, to obtain an $n$-term. Opening is not simply a textual
replacement because as we descend into the arguments of an operation
the identity of the index may change, shifting up by $m$ if the
argument is an $m$-ary binder. Here we have to work a little harder
than in~\cite{PierceB:engfm} to give the precise definition, both to
get a well-scoped version and to treat the case of an arbitrary
binding signature:

\begin{definition}[\textbf{Opening} and \textbf{concretion}]
  \label{def:conc} Given $n:\Nat$,
  the \emph{concretion} $t\conc{u}$ of an $(1 +n)$-term
  $t : \Trm{1 + n}$ by a $0$-term $u : \TRM$ is the $n$-term
  $(\zero \open u)t$, where for any index $i : \Fin(1 + n)$, the
  \emph{opening} of $i$ by $u$ in $t$ is the $n$-term $(i \open u)t$
  defined by recursion on the structure of $t$ as follows:
  \begin{align*}
    (i \open u)(\idx j)
    =
      &\begin{cases}
        u \wkn n &\text{if $i \eq j$}\\
        \idx(\rem{i}{j}) &\text{otherwise}
      \end{cases}\\
    (i\open u)(\atm x)
    = \;&\atm x\\
    (i\open u)(\oprn\op\,[t_1,\ldots,t_k])
    = \; &\oprn\op\,[(\suchat{m_1} i \open
           u)t_1,\ldots,(\suchat{m_k} i \open
      u)t_k]\\
    &\text{where $\arity\Sigma\,\op = [m_1,\ldots,m_k]$, say.}
  \end{align*}
  See~\agdalinkalt{WSLN.Sig.Concretion}{Concretion}. Here
  $\suchat{m}:\Fin(1 + n) \fun \Fin(1 + m + n)$ is the $m$-fold
  composition of $\suc$ and $\rem{i}{j}:\Fin n$ removes an index $i$
  from a non-empty set of indices $\Fin(1+ n)$ maintaining the order
  between the other indices $j$ as they are mapped back to $\Fin n$;
  see~\agdadef{WSLN.Index}{remove}{Index.remove}.
\end{definition}

Abstraction is the $i = \zero$ case of the name \emph{closing}
operation~\cite[Figure~3]{PierceB:engfm} which, roughly speaking,
replaces occurrences of an atom named $x$ in an $n$-term by an index
$i :\Fin (1 + n)$ to obtain a $(1 + n)$-term. As for opening, the
precise definition involves index shifting in binding arguments and
some care in order to get the correct typing for a well-scoped
version:

\begin{definition}[\textbf{Closing} and \textbf{abstraction}]
  \label{def:abs}
  Given $x : \Atom$,
  the \emph{abstraction} $\abs{x}t$ of an $n$-term $t : \Trm{n}$ is
  the $(1+n)$-term $(\zero\close x)t$, where for any index
  $i :\Fin(1 + n)$ the \emph{name closing} of $x$ by $i$ in $t$ is the
  $(1 + n)$-term $(i\close x)t$ defined by recursion on the structure
  of $t$ as follows:
  \begin{align*}
    (i \close x)(\idx j) = \;
    &\idx(\ins{i}{j})\\
    (i\close x)(\atm y) =
    &\begin{cases}
        \idx i &\text{if $x \eq y$}\\
        \atm y &\text{otherwise}
      \end{cases}\\
    (i\close x)(\oprn\op\,[t_1,\ldots,t_k])
    = \; &\oprn\op\,[(\suchat{m_1} i \close
           x)t_1,\ldots,(\suchat{m_k} i \close
      x)t_k]\\
    &\text{where $\arity\Sigma\,\op = [m_1,\ldots,m_k]$, say.}
  \end{align*}
  See~\agdalinkalt{WSLN.Sig.Abstraction}{Abstraction}. As in the
  previous definition $\suchat{m}$ is the $m$-fold composition of
  $\suc$, but now $\ins{i}{j}$ is the result of order-preserving
  insertion of the elements $j$ of $\Fin n$ into $\Fin (1+ n)$
  avoiding the given index $i : Fin (1+ n)$;
  see~\agdadef{WSLN.Index}{insert}{Index.insert}.
\end{definition}

\begin{lemma}
  \label{lem:conc-abs}
  We give some commonly needed properties of concretion, abstraction
  and substitution, whose proofs may be found in
  \emph{\agdalinkalt{WSLN.Sig.Concretion}{Concretion}},
  \emph{\agdalinkalt{WSLN.Sig.Abstraction}{Abstraction}} and
  \emph{\agdalinkalt{WSLN.Sig.Substitution}{Substitution}}.
  \begin{gather}
    {y\freshfor t} \fun {\abs{x}t \eq \abs{y}\,(x\ssubst y)\subst
      t} \label{eq:9}\\
    (\abs{x}t)\conc{u} \eq (x \ssubst u)\subst t\label{eq:1}\\
    {x\freshfor t} \fun \abs{x}\,t\conc{x} \eq t\label{eq:2}\\
    \supp t \subseteq \supp(t\conc{u}) \subseteq \supp t \cup
    \supp u \label{eq:3}\\
    x \notin \supp(\abs{x}t) \subseteq \supp t\label{eq:4}\\
    \sigma\subst (t\conc{u}) \eq (\sigma\subst t)\conc{\sigma\subst
      u} \label{eq:7}\\
    (\forall x \fun {x\in\supp t} \fun {\sigma\,x\eq\sigma'\,x}) \fun
    \sigma* t \eq \sigma'* t \label{eq:13}\\
    \sigma\subst(\abs{x}t) \eq \abs{y}(\sigma\comp(x\ssubst y)
    \subst t)
    \quad\text{if}\;\;\forall z \fun z\in\supp t \fun z\not\equiv x
    \fun y \freshfor
    \sigma\,z \label{eq:8}\\
    (i \open u)(t\wkn(1 +n)) \eq t \wkn n
    \quad\text{if}\;\; i:\Fin(1+n),\, t:\Trm{k},\,
    k\leq\toN(i)\label{eq:5}\\
    (i \close x)(t\wkn n) \eq t \wkn (1 + n)
    \quad\text{if}\;\; i:\Fin(1 + n),\, t:\Trm{k},\, k\leq\toN(i),\,
    x\freshfor
    t\label{eq:6}
  \end{gather}
  In \eqref{eq:5} and \eqref{eq:6}, $\toN$ is the  function
  injecting $\Fin(1 + n)$ as an initial order segment of $\Nat$. \qed
\end{lemma}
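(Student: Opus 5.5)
The plan is to generalize each property from the special index $\zero$ to an arbitrary index $i:\Fin(1+n)$, and from $0$-terms to $n$-terms, so that structural induction on $t$ goes through the binder case. In that case the index changes to $\suchat{m} i$ and the scope to $m+n$. For example, \eqref{eq:1} would be proved in the generalized form
\[
(i\open u)((i\close x)t) \eq (x\ssubst u)\subst t \quad (t:\Trm{n},\ i:\Fin(1+n)),
\]
by induction on $t$, with a simultaneous induction over argument lists $\Arg{n}$.

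The index case reduces to the combinatorial fact $\rem{i}{(\ins{i}{j})}\eq j$ about $\Fin$, together with the fact that $\ins{i}{j}\not\eq i$. The atom case splits on decidable equality of $x$ and $y$. If $x\eq y$ we get $u\wkn n$ on both sides. Otherwise we get $\atm y$, using that $\atm y\wkn n\eq \atm y$. The operator case is immediate from the induction hypotheses at $\suchat{m_j} i$.

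The remaining properties follow the same pattern. For \eqref{eq:2} we use the generalized statement $x\freshfor t\fun (i\close x)((i\open \atm x)t)\eq t$. This needs the converse identity $\ins{i}{(\rem{i}{j})}\eq j$ for $j\not\eq i$. The hypothesis $x\freshfor t$ excludes the atom case $y\eq x$. Properties \eqref{eq:3} and \eqref{eq:4} are support calculations by the same induction; the inclusions are proved elementwise via membership proofs. Property \eqref{eq:13} is a direct induction. For \eqref{eq:7} we generalize to $\sigma\subst((i\open u)t)\eq(i\open\sigma\subst u)(\sigma\subst t)$. Its atom case needs the lemma that substitution commutes with scope weakening, and the fact that opening a weakened $0$-term does nothing. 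That second fact is exactly \eqref{eq:5}, so \eqref{eq:5} and \eqref{eq:6} should be proved first. They are proved by induction on $t:\Trm{k}$, using that injections of initial segments below $\toN(i)$ are fixed by $\mathtt{remove}$ and $\mathtt{insert}$.

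Property \eqref{eq:9} is derived from the generalized $y\freshfor t\fun (i\close x)t\eq(i\close y)((x\ssubst y)\subst t)$. Property \eqref{eq:8} is the generalized $\sigma\subst((i\close x)t)\eq(i\close y)((\sigma\comp(x\ssubst y))\subst t)$ under the freshness hypothesis. Its atom case $z\not\eq x$ uses \eqref{eq:6} to push $(i\close y)$ through $(\sigma\,z)\wkn n$, since $y\freshfor\sigma\,z$.

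I expect the main obstacle to be the Agda-level bookkeeping rather than any conceptual step. Scopes like $\suchat{m}$ applied at type $\Fin(1+m+n)$ must line up definitionally with $m+(1+n)$, which requires rewriting along $m+(1+n)\eq 1+(m+n)$ or choosing definitions that make it hold by conversion. Likewise, \eqref{eq:5} and \eqref{eq:6} need the right generalization over $k\leq\toN(i)$ so that the induction hypothesis applies under binders; I would first try indexing the weakening by an explicit proof of $k\leq n$ and shifting both the bound and $i$ by $m$ in the binder case.
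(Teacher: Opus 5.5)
Your proposal is correct and matches the approach the paper relies on: it defers the proofs to the Agda modules, and it defines opening and closing for an arbitrary index $i$ precisely so that each property can be proved in generalized form by simultaneous structural induction on terms and argument lists, with \eqref{eq:5} and \eqref{eq:6} supplying the atom cases of \eqref{eq:7} and \eqref{eq:8}. One small slip: in \eqref{eq:7} the atom case needs only \eqref{eq:5} with $k=0$, whereas the commutation of substitution with scope weakening is needed in the index case $j\eq i$, to identify $\sigma\subst(u\wkn n)$ with $(\sigma\subst u)\wkn n$.
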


Although well-scoped locally nameless terms over a binding signature
form an $\Nat$-indexed inductive datatype $\Trm{\_}$, the type
$\Trm{0}$ of $0$-terms (locally closed terms) is not inductive by
itself. So if one wishes to define a function on that type alone, one
cannot proceed by structural induction without giving a definition for
all $n$-terms and not just for $0$-terms. Doing so may not be
convenient, or even possible. Instead, one can argue by recursion over
the size of locally closed terms, for a suitable notion of size (the
proof of \eqref{eq:12} in Proposition~\ref{prop:bij} is an example of
this). ``Suitable'' means that size should be preserved by the
operations of scope weakening, abstraction, concretion with a name, and
renaming. Here is one such definition:

\begin{lemma}
  \label{lem:size}
  We define the \emph{size} of an $n$-term $t: \Trm{n}$ to be the
  natural number $\size\, t : \Nat$ given by recursion on the
  structure of $t$ as follows:
  \begin{align*}
    \size(\idx i) = \;
    &0\\
    \size(\atm x) = \;
    &0\\
    \size(\oprn\op\,[t_1,\ldots,t_k]) = \;
    &1 + \max\{\size\,t_1,\ldots,\size\,t_k\}
  \end{align*}
  Then $\size$ satisfies
  \begin{align}
    \size(t \wkn n) &\eq \size\,t\label{eq:14}\\
    \size(t \conc{x}) &\eq \size\,t\label{eq:15}\\
    \size(\abs{x}t) &\eq \size\,t\label{eq:16}\\
    \size(\rho\subst t) &\eq \size\,t
    \quad(\rho:\Atom\fun\Atom)\label{eq:17}
  \end{align}
  all proved by induction on the structure of $t$;
  see~\emph{\agdalinkalt{WSLN.Sig.Size}{Size}}. \qed
\end{lemma}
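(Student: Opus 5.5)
Each of the four equations is proved by structural induction on $t$. Since concretion and abstraction are the $\zero$ instances of opening and closing, and the index changes when we pass under a binder, the induction has to be carried out for the general forms. So I would first prove the strengthened statements
\begin{align*}
  \size((i\open \atm x)t) &\eq \size\,t \quad(i:\Fin(1+n)),\\
  \size((i\close x)t) &\eq \size\,t \quad(i:\Fin(1+n)),\\
  \size(\sigma\subst t) &\eq \size\,t \quad\text{whenever every $\sigma\,x$ has size $0$},
\end{align*}
together with the weakening property \eqref{eq:14} for the general injection $\Trm{m}\fun\Trm{n}$. Then \eqref{eq:15} and \eqref{eq:16} are the cases $i=\zero$, and \eqref{eq:17} is the case $\sigma=\mathbf{a}\comp\rho$, since $\size(\atm(\rho\,x))\eq 0$.

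Each induction runs mutually over $\Trm{n}$ and $\Arg{n}\,\mathit{ms}$, with $n$ and the index $i$ generalised. For the argument lists I would state the auxiliary fact as equality of the list of sizes, or of the maximum over the list. The base cases go as follows.
\begin{itemize}
\item For $\idx j$: the results are $\idx(\rem{i}{j})$, $\idx(\ins{i}{j})$, or a weakened index. All of these have size $0$, and in the opening case with $i\eq j$ the result is $(\atm x)\wkn n=\atm x$, which also has size $0$.
\item For $\atm y$: closing gives either $\idx i$ or $\atm y$, opening gives $\atm y$, and substitution gives $(\sigma\,y)\wkn n$. In the last case \eqref{eq:14} together with the hypothesis on $\sigma$ gives size $0$.
\end{itemize}
In the operator case each operation acts argument-wise, with $\suchat{m_j} i$ in place of $i$, so the induction hypotheses applied at scope $m_j+n$ show the sizes of the arguments agree pointwise. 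The two maxima are then equal, and adding $1$ preserves the equality.

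The only real obstacle is bookkeeping in Agda. The induction must be generalised over arbitrary indices $i$ and over argument lists, so that the hypotheses apply at the shifted scopes $m_j+n$. Any index arithmetic in the types, such as $m+(1+n)$ versus $1+(m+n)$, must also be handled, presumably in the same way the definitions of opening and closing handle it. Beyond that, every step is immediate from the defining equations of $\size$.
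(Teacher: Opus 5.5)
Your proposal is correct and matches the paper's approach. The paper simply says each equation is proved by structural induction on $t$, with details in the Agda file \texttt{WSLN.Sig.Size}, and you give that induction: mutual over $\Trm{n}$ and $\Arg{n}$, with concretion and abstraction generalised to opening and closing at an arbitrary index $i$ so the hypotheses apply under binders, and renaming treated as substitution by size-$0$ terms using \eqref{eq:14} in the atom case.
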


%%%%%%%%%%%%%%%%%%%%%%%%%%%%%%%%%%%%%%%%%%%%%%%%%%%%%%%%%%%%%%%%%%%%%%
\section{Adequacy}
\label{sec:ade}

Why are the data structures and functions described in the previous
section an adequate representation of syntax involving binding
operations? \citet[section~3.4]{PierceB:engfm} discuss what this
question means informally for the locally nameless representation; and
\cite{ChargueraudA:locnr} and its accompanying Rocq development are
convincing evidence that it both works as expected and works
well. Here we wish to provide a formal result about adequacy of the
well-scoped version provided by \agdalink{WSLN}.  The motivation is to
check that the ``index yoga'' present in Definitions~\ref{def:conc}
and~\ref{def:abs} is correct, since it is notoriously easy to make
mistakes with this; also, the author is not aware of a similar result
in the literature at the level of generality of an arbitrary binding
signature. By ``correct'' in the previous sentence we mean that for
any binding signature, there is a bijection between the type $\TRM$ of
well-scoped locally nameless $0$-terms (Definition~\ref{def:ter}) and
$\alpha$-equivalence classes\footnote{We are working in a very weak
  dependent type theory, in particular one lacking any form of
  quotient types; so we work with $\alpha$-equivalence classes
  implicitly, using the equivalence relation of $\alpha$-conversion
  between representatives to form a setoid.} of nameful terms
(Proposition~\ref{prop:bij}), under which Definition~\ref{def:sub}
corresponds to the usual definition of capture-avoiding substitution
for such equivalence classes (Proposition~\ref{prop:correct}).

First we need to define the ``nameful'' terms associated with a
binding signature $\Sigma$.

\begin{definition}[\agdalinkalt{Adequacy.Nameful}{Nameful}]
  \label{def:nomter}
  The mutually inductive datatypes $\NomTrm:\Set$ (\emph{nameful
    terms}), $\NomArg: \List\Nat \fun \Set$ (\emph{nameful argument
    lists}) and $\NomBnd : (n: \Nat) \fun \Set$ (\emph{nameful $n$-ary
    bindings}) have constructors
  \begin{align*}
    {\atm} &: \Atom \fun \NomTrm\\
    {\oprn} &: (\op : \Op\Sigma) \fun \NomArg\,(\arity\Sigma\,\op) \fun
                \NomTrm\\
    \NIL &: \NomArg\,\NIL\\
    \_{\CONS}\_ &: \{m : \Nat\}\{\mathit{ms} : \List\Nat\} \fun
                  \NomBnd\, m
                  \fun \NomArg\,\mathit{ms} \fun \NomArg\,(m \CONS
                  \mathit{ms}) \\
    \bndNil &: \NomTrm \fun \NomBnd\,0\\
    \bndPair{\_}{\_} &: \{m : \Nat\} \fun \Atom \fun \NomBnd\,m \fun
    \NomBnd\,(1 + m)
  \end{align*}
\end{definition}
Thus a nameful term is either an atom $\atm x$, or a compound term
$\oprn\op\,[b_1,\ldots,b_k]$ where $k$ is the length of the list
$\arity\Sigma\,\op : \List\Nat$, say
$\arity\Sigma\,\op = [m_1,\ldots,m_k]$, and for each $i=1,\ldots,k$,
$b_i$ is a nameful $m_i$-ary binding. A nameful $m$-ary binding takes
the form
$\langle x_m\,,\ldots\langle x_2\,,\langle x_1\,,\langle\rangle M
\rangle\rangle\ldots\rangle$ where the nameful term $M:\NomTrm$ is the
\emph{subject} of the binding and the $m$-vector $x_1,x_2,\ldots,x_m$
of (not necessarily
distinct\footnote{\citet[section~2]{PlotkinGD:illtr} asks for distinct
  names; but every nameful term in our more lenient sense is
  $\alpha$-equivalent to one with distinct bound names.}) atomic names
form the \emph{binding names} of the binding.

We translate nameful terms into well-scoped locally nameless terms
using abstraction $\abs{x}\_$ (Definition~\ref{def:abs}) to interpret
name binding $\bndPair{x}{\_}$:

\begin{definition}[\agdalinkalt{Adequacy.Translation}{Translation}]
  \label{def:transl}
  Given a nameful term $M : \NomTrm$, the corresponding well-scoped
  locally nameless term $\den{M}: \TRM$ is defined by structural
  recursion, as follows.
  \begin{align*}
    \den{\atm x}
    &= \atm x\\
    \den{\oprn\op\,[b_1,\ldots,b_k]}
    &= \oprn\op\,[\denb{b_1},\ldots,\denb{b_k}] \\
    \denb{\langle x_m\,,\ldots\langle x_2\,,\langle x_1\,,\langle\rangle
    M \rangle\rangle\ldots\rangle}
    &= \abs{x_m}(\ldots(\abs{x_2}(\abs{x_1} \den{M}))\ldots)
  \end{align*}
\end{definition}

We will prove that $\den{\_}$ gives a bijection between nameful terms
modulo $\alpha$-equivalence and well-scoped locally nameless terms. In
order to define $\alpha$-equivalence for nameful terms we need some
preliminary results about support and renaming for this kind of expression.

The \emph{support} $\supp M : \FsetA$ of a nameful term $M : \NomTrm$
is the finite set of all names occurring in it (be they in atom
position, $\atm{x}$, or in binding position, $\bndPair{x}{\_}$),
defined by a straightforward structural recursion. As before, we write
$x \freshfor M$ for the type of proofs of $x\notin \supp M$ and say
that the name $x$ is \emph{fresh} for $M$ if there is such a proof.

Given a function $\rho:\Atom\fun\Atom$ and $M : \NomTrm$, then the
\emph{renamed} term $\rho\subst M : \NomTrm$ is obtained by applying
$\rho$ to all occurrences of names in $M$, be they in atom position,
$\atm{x}$, or in binding position, $\bndPair{x}{\_}$.  In particular,
if $x,y : \Atom$, then $(x \ssubst y)\subst M$ is the renaming with
$\rho$ the function mapping $x$ to $y$ and all other names to
themselves. Renaming by an arbitrary function $\rho$ may not commute
with the translation in Definition~\ref{def:transl}, for example if
$\rho$ identifies two distinct binding names in a term. However, it
will do so if $\rho$ is a permutation~\cite{PittsAM:nomsns}. We work
implicitly with finite permutations, by considering functions
$\rho:\Atom\fun\Atom$ that are injective on a relevant finite set of
atoms (such as the support of a particular expression). In particular
we have:
\begin{lemma}
  \label{lem:frn}
  For all $x,y:\Atom$ and $M: \NomTrm$, if $y \freshfor M$ then
  $\den{(x\ssubst y)\subst M} \eq (x \ssubst y)\subst \den{M}$
  (and similarly $\denb{(x\ssubst y)\subst b} \eq (x \ssubst y)\subst
  \denb{b}$ if $y \freshfor b : \NomBnd$).
\end{lemma}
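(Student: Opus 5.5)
We proceed by mutual structural induction on $M:\NomTrm$, on nameful argument lists, and on $b:\NomBnd\,m$. The hypothesis $y\freshfor\_$ is inherited by every subexpression, since support is defined by structural recursion and is monotone in subterms. Throughout, write $\rho$ for the function $\Atom\fun\Atom$ mapping $x$ to $y$ and fixing every other name, and $\sigma=\mathbf{a}\comp\rho$ for the corresponding substitution. Then $(x\ssubst y)\subst\_$ on locally nameless terms is $\sigma\subst\_$.

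We will also need a small auxiliary fact, proved first by the same mutual induction: $\supp\den{M}\subseteq\supp M$ and $\supp\denb{b}\subseteq\supp b$. Its atom and operator cases are immediate. Its binding case uses the second inclusion of \eqref{eq:4}, $\supp(\abs{z}t)\subseteq\supp t$.

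For the main statement, the atom case $\den{\atm z}$ is a direct case split on whether $z\eq x$. Substitution of atoms into a $0$-term involves only the trivial weakening $\_\wkn 0$, which is the identity by the presheaf laws. The operator case follows from the defining clause of $\sigma\subst(\oprn\op\,[t_1,\ldots,t_k])$, which acts argumentwise, together with the induction hypotheses for the bindings $b_i$. The case $\bndNil M$ is just the hypothesis for $M$.

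The real work is the case $b=\bndPair{z}{b'}$. Here $\denb{b}=\abs{z}t$ with $t=\denb{b'}$, and by induction $\denb{(x\ssubst y)\subst b'}\eq\sigma\subst t$. We must therefore show
\[
\sigma\subst(\abs{z}t)\eq\abs{\rho z}(\sigma\subst t).
\]
We apply \eqref{eq:8} with the new binding name $\rho z$, which gives
\[
\sigma\subst(\abs{z}t)\eq\abs{\rho z}\bigl(\sigma\comp(z\ssubst\rho z)\subst t\bigr),
\]
provided two things hold: the side condition of \eqref{eq:8}, and that $\sigma\comp(z\ssubst\rho z)\subst t\eq\sigma\subst t$. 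For the latter we use \eqref{eq:13}, so it suffices to check agreement on atoms in $\supp t\subseteq\supp b'$.

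\textbf{Case $z\eq x$.} Here $\rho z=y$.
\begin{itemize}
\item Side condition of \eqref{eq:8}: for $u\in\supp t$ with $u\not\eq x$ we have $\sigma\,u=\atm u$. Since $u\in\supp b'$ and $y\freshfor b$, we get $y\not\eq u$, as required.
\item Agreement for \eqref{eq:13}: both substitutions send $x$ to $\atm y$, because $\sigma\,y=\atm y$ (trivially so if $y\eq x$). They agree on every other atom.
\end{itemize}

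\textbf{Case $z\not\eq x$.} Here $\rho z=z$.
\begin{itemize}
\item Agreement for \eqref{eq:13}: $(z\ssubst z)$ acts as the identity substitution pointwise, so $\sigma\comp(z\ssubst z)$ agrees with $\sigma$ everywhere.
\item Side condition of \eqref{eq:8}: we need $z\freshfor\sigma\,u$ for $u\in\supp t$ with $u\not\eq z$. Now $\sigma\,u$ is either $\atm u$ with $u\not\eq z$, or $\atm y$. In the latter case $z\not\eq y$, because $z$ occurs in binding position in $b$ and $y\freshfor b$.
\end{itemize}

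I expect this binding case to be the main obstacle. The difficulty is in getting the freshness side conditions of \eqref{eq:8} to line up, which is exactly why the support inclusion $\supp\denb{b'}\subseteq\supp b'$ is needed. One must also check that \eqref{eq:8} and \eqref{eq:13} are available at the scope level $m$ of $\denb{b'}$, not only for $0$-terms.
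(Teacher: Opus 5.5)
Your proof is correct and follows the paper's route. The paper also uses a structural induction whose only non-trivial case is the binder $\bndPair{z}{b'}$, handled by \eqref{eq:8}, where \eqref{eq:13} and $\supp\denb{b'}\subseteq\supp b'$ discharge the side conditions; the one difference is that the paper first proves $\den{\rho\subst M}\eq\rho\subst\den{M}$ for any $\rho$ injective on $\supp M$ and derives the lemma as a corollary, and your case split on $z\eq x$ is exactly the check that $(x\ssubst y)$ is injective on $\supp b$ when $y\freshfor b$.
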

\begin{proof}
  This is a corollary of the more general result that
  $\den{\rho\subst M} \eq \rho\subst\den{M}$ if $\rho$ is injective on
  $\supp M$. The proof of that is by induction on the structure of
  $M$, making use of property \eqref{eq:8} from
  Lemma~\ref{lem:conc-abs} for the case of a binder
  $\bndPair{x}{b}$. See
  \agdadef{Adequacy.Translation}{freshRn}{Translation.freshRn}.
\end{proof}

\begin{definition}[\agdadef{Adequacy.Nameful}{_~_}{$\_{∼}\_$}]
  \label{def:aeq}
  The relation of \emph{$\alpha$-equivalence}
  between nameful terms is
  given by datatypes
\begin{align*}
  \_{\aeq}\_
  &: (M\;M' : \NomTrm) \fun \Set\\
  \_{\aeqa}\_
  &: \{\mathit{ms} : \List\Nat\}
  (\mathit{Ms}\;\mathit{Ms}' : \NomArg\,\mathit{ms}) \fun \Set\\
  \_{\aeqb}\_
  &: \{m : \Nat\}(b\;b' : \NomBnd\,m) \fun \Set
\end{align*}
with constructors
\begin{align*}
  {\aeq}{\atm}
  :\; &(x : \Atom)\fun \atm{x} \aeq \atm{x}\\[\jot]
  {\aeq}{\oprn}
  :\; &\{\op: \Op\Sigma\}
    \{\mathit{bs}\;\mathit{bs}' : \NomArg\,(\arity\Sigma\,\op)\} \fun{}\\[-\jot]
    &\mathit{bs}\aeqa\mathit{bs}'
    \fun {\oprn \op\,\mathit{bs}\aeq \oprn\op\,\mathit{bs}'}\\[\jot]
  {\aeq}\NIL
  :\; &\NIL \aeqa \NIL\\[\jot]
  {\aeq}{\CONS}
  :\; &\{m : \Nat\}
    \{\mathit{ms} : \List\Nat\}
    \{b\;b' : \NomBnd\,m\}
    \{\mathit{bs}\;\mathit{bs}' : \NomArg\,\mathit{ms}\}\fun{}\\[-\jot]
    &b\aeqb b' \fun
    \mathit{bs}\aeqa\mathit{bs}'
    \fun
    (\mathit{b}\CONS \mathit{bs}) \aeqa (\mathit{b}'\CONS
      \mathit{bs}')\\[\jot]
  {\aeq}\bndNil :\;
      &\{M\;M' : \NomTrm\}\fun
      M \aeq M' \fun \bndNil\,M \aeqb
        \bndNil\,M'\\[\jot]
  {\aeq}\bndPair{}{} :\;
      &\{m : \Nat\}
        \{x\;x'\;y : \Atom\}
        \{b\;b' : \NomBnd\,m\}\fun{}\\[-\jot]
      &{y\freshfor b} \fun {y \freshfor b'} \fun
        {(x\ssubst y)\subst b} \aeqb
        {(x'\ssubst y)\subst b'}
        \fun \bndPair{x}{b} \aeqb \bndPair{x'}{b'}
\end{align*}
\end{definition}

\begin{proposition}
  \label{prop:bij}
  The translation in Definition~\ref{def:transl} is one half of a
  bijection between $\NomTrm$ modulo $\aeq$ and $\TRM$ modulo $\eq$.
  In other words, it is \emph{sound} for
  $\alpha$-equivalence, in the sense that for all nameful terms
  $M,M':\NomTrm$
  \begin{equation}
    \label{eq:10}
    {M \aeq M'} \fun {\den{M}\eq \den{M'}}
  \end{equation}
 it is \emph{injective} in the sense that the reverse
 implication
 also holds
 \begin{equation}
   \label{eq:11}
   {\den{M}\eq \den{M'}} \fun {M \aeq M'}
 \end{equation}
 and it is \emph{surjective}, in the sense that for all well-scoped
 locally nameless terms $t : \TRM$ there is a nameful term
 $\den{t}^{-1} : \NomTrm$ with
 \begin{equation}
   \label{eq:12}
   \den{\den{t}^{-1}} \eq t
 \end{equation}
 Taking $t=\den{M}$ in \eqref{eq:12} and applying \eqref{eq:11}, we
 also have $\den{\den{M}}^{-1} \aeq M$.
\end{proposition}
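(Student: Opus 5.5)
We prove the three parts separately, each by induction simultaneously over the mutually defined nameful datatypes ($\NomTrm$, $\NomArg$, $\NomBnd$) or over the $\alpha$-equivalence derivations.

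\emph{Soundness} \eqref{eq:10} goes by induction on the derivation of $M\aeq M'$, together with $\aeqa$ and $\aeqb$. The constructors ${\aeq}\atm$, ${\aeq}\oprn$, ${\aeq}\NIL$, ${\aeq}\CONS$ and ${\aeq}\bndNil$ are immediate congruences. For ${\aeq}\bndPair{}{}$ we have $y\freshfor b$, $y\freshfor b'$ and, by induction, $\denb{(x\ssubst y)\subst b}\eq\denb{(x'\ssubst y)\subst b'}$. Lemma~\ref{lem:frn} rewrites both sides as $(x\ssubst y)\subst\denb{b}$ and $(x'\ssubst y)\subst\denb{b'}$. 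We need $y\freshfor\denb{b}$, which follows from an auxiliary lemma $\supp\denb{b}\subseteq\supp b$, proved by induction using \eqref{eq:4}. Property \eqref{eq:9} then gives
\[
  \abs{x}\denb{b}\eq\abs{y}(x\ssubst y)\subst\denb{b}\eq\abs{y}(x'\ssubst y)\subst\denb{b'}\eq\abs{x'}\denb{b'}.
\]

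\emph{Injectivity} \eqref{eq:11} goes by induction on $M$, with $\aeqb$ by induction on $b$, generalised over $M'$ and $b'$. Constructors of $\TRM$ are injective, so $\den{M}\eq\den{M'}$ forces matching head forms and equal translations of arguments. The key case is $\abs{x}\denb{b}\eq\abs{x'}\denb{b'}$. Choose $y=\new(\supp b\cup\supp b'\cup\{x,x'\})$. By \eqref{eq:9} both sides become $\abs{y}(\ldots)$. Concreting at $\atm y$ with \eqref{eq:1} gives $(x\ssubst y)\subst\denb{b}\eq(x'\ssubst y)\subst\denb{b'}$; equivalently one may use \eqref{eq:2}, since $y$ is fresh for the renamed terms. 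Lemma~\ref{lem:frn} turns this into $\denb{(x\ssubst y)\subst b}\eq\denb{(x'\ssubst y)\subst b'}$. The induction hypothesis then yields $(x\ssubst y)\subst b\aeqb(x'\ssubst y)\subst b'$, and ${\aeq}\bndPair{}{}$ finishes the case. This requires the induction to apply to renamed bindings, so I would induct on a size measure of $b$ that renaming preserves, rather than on structure; alternatively, induction on structure suffices if the hypothesis is stated for all $b'$ and all renamings of $b$ by injective-on-support maps.

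\emph{Surjectivity} \eqref{eq:12} is, as the paper signals, the main obstacle, because $\TRM$ is not inductive on its own. I would define $\den{t}^{-1}$ by well-founded recursion on $\size\,t$. To deal with $m$-ary binder arguments $t_i:\Trm{m_i}$, I would instead define, for every $n$, a map sending an $n$-term $s$ to an $n$-ary binding. Pick $y=\new(\supp s)$ and concrete at $\atm y$ (the $\zero$ index), giving an $(n-1)$-term $s\conc{y}$; its size is unchanged by \eqref{eq:15} and \eqref{eq:14}. Recurse to obtain a binding $b$, and return $\bndPair{y}{b}$. At $n=0$, a term $\oprn\op\,[t_1,\ldots,t_k]$ recurses on each $t_i$, which has strictly smaller size. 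The equation $\den{\den{t}^{-1}}\eq t$ follows by the same recursion: the binder case reduces to $\abs{y}(s\conc{y})\eq s$, which is \eqref{eq:2} with $y\freshfor s$ from $\newNotIn$. The final remark follows by combining \eqref{eq:12} with $t=\den{M}$ and \eqref{eq:11}.
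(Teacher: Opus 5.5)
Your proposal is correct and follows essentially the same route as the paper: soundness by induction on the $\alpha$-equivalence derivation using Lemma~\ref{lem:frn} and \eqref{eq:9}; injectivity by induction on a renaming-invariant size, with a fresh $y$ so that the induction hypothesis applies to the renamed bindings; and surjectivity by induction on $\size\,t$ with an inner induction on the binding depth, concreting at $\new(\supp s)$ and using \eqref{eq:15} and \eqref{eq:2}. One small slip is your parenthetical alternative via \eqref{eq:2} in the injectivity step, which does not work as stated, because $(x\ssubst y)\subst\denb{b}$ generally does contain $y$ and \eqref{eq:2} is not the one-sided inverse you need there; your main argument is fine, though (concreting both sides at $\atm y$ and applying \eqref{eq:1}), and it does not even need the preliminary appeal to \eqref{eq:9}.
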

\begin{proof}
  Property~\eqref{eq:10} is proved by induction on the definition of
  $\_{\aeq}\_$ (simultaneously proving similar properties of
  $\_{\aeqa}\_$ and $\_{\aeqb}\_$). The only non-trivial step is for
  the last constructor in Definition~\ref{def:aeq}, where one uses
  Lemmas~\ref{lem:conc-abs}\eqref{eq:9} and
  \ref{lem:frn}.
  See~\agdadef{Adequacy.Translation}{sound}{Translation.sound}.

  Property~\eqref{eq:11} is not as easy to prove. Really the only
  proof strategy available is to do an induction, analysing the
  structure of the terms $M$ and $N$ in an assumed proof of
  $\den{M}\eq\den{N}$. The definition of $\den{\_}$ is
  syntax-directed, so all proceeds well until the last induction step,
  for bindings of the form $\bndPair{x}{b}$ and $\bndPair{x'}{b'}$. In
  this case, given the way $\alpha$-equivalence is defined, to
  progress the proof we have to choose a fresh name
  $y = \new(\supp b\cup\supp b')$ in order to apply the induction
  hypothesis to $(x\ssubst y)\subst b$ and $(x'\ssubst y)\subst b'$;
  but these are not sub-expressions of the original terms, so one
  cannot use \emph{structural} induction. Instead we use induction on
  the \emph{size} of terms using functions $\NomTrm\fun\Nat$ and
  $\NomBnd\fun\Nat$ satisfying $\size(\rho\subst M) \eq \size M$ and
  $\size(\oprn\op\,[\denb{b_1},\ldots,\denb{b_k}]) > \max\{\size(b_1)
  , \ldots, \size(b_k)\}$;
  see~\agdadef{Adequacy.Nameful}{sizeNomTrm}{Nameful.sizeNomTrm} for
  the definition of $\size$ and
  \agdadef{Adequacy.Translation}{injective}{Translation.injective}
  for the proof of~\eqref{eq:11}.

  Similarly, for the proof of \eqref{eq:12} we use
  Lemma~\ref{lem:size} to proceed by induction on the size of
  well-scoped locally nameless terms $t$. For the case of a compound
  term $\oprn\op\,[\ldots,t,\ldots]$ we have to construct
  $\den{t}^{-1}$ for an $m$-term $t$ (for some $m:\Nat$) and do so by
  induction on $m$; at the induction step for $1 + m$ we concrete $t$
  at a fresh name $x = \new (\supp\, t)$ to get an $m$-term and rely
  upon the fact that the size of $t\conc{x}$ is the same as the size
  of $t$; see
  \agdadef{Adequacy.Translation}{surjective}{Translation.surjective} for
  the details of the proof.
\end{proof}

To define substitution of nameful terms for the free names of a
nameful term we follow \citet{StoughtonA:subr} and use simultaneous
substitutions and the function $\new :\FsetA\fun\Atom$ that returns a
name that is fresh for a given finite set of names (see
section~\ref{sec:ato}) to get a structurally recursive and totally
defined capture-avoiding substitution function.

\begin{definition}[\agdalinkalt{Adequacy.Substitution}{Substitution}]
  \label{def:capas}
  The nameful term $\sigma\subst M$ obtained from $M:\NomTrm$ by
  \emph{capture-avoiding substitution} along $\sigma:\Atom\fun\NomTrm$
  is defined by recursion on the structure of $M$ as follows:
  \begin{align*}
    \sigma\subst(\atm x) = \;
    &\sigma\,x\\
    \sigma\subst(\oprn\op\,[b_1,\ldots,b_k]) = \;
    &\oprn\op\,[\sigma\subst b_1,\ldots, \sigma\subst b_k]\\
    \sigma\subst(\bndNil\,M) = \;
    &\bndNil\,(\sigma\subst M)\\
    \sigma\subst\bndPair{x}{b} = \;
    &\bndPair{y}{\sigma\comp({x:= \atm{y}})\subst b}\\[-\jot]
    &\text{where $y =
      \new\textstyle\bigcup\{\supp(\sigma\,z) \mid z\in\supp b\}$}
  \end{align*}
  The last clause uses the $\new :\FsetA\fun\Atom$ function to select
  an atom $y$ that is not in any of the finite sets $\supp(\sigma\,z)$
  as $z$ ranges over the finite set $\supp b$; the binder
  $\bndPair{x}{\_}$ is freshened to $\bndPair{y}{\_}$ and $\sigma$ is
  updated to $\sigma\comp({x:= \atm{y}})$ before being applied to the
  body $b$.\footnote{\citet{StoughtonA:subr} uses a slightly stronger
    freshness condition, by removing $x$ from the finite set of atoms
    that $y$ has to avoid; the version we have given suffices for
    Proposition~\ref{prop:correct}.}
\end{definition}

\begin{proposition}
  \label{prop:correct}
  The translation $\den{\_}:\NomTrm\fun\TRM$ from
  Definition~\ref{def:transl} sends capture-avoiding substitution of
  nameful terms to substitutions of well-scoped locally nameless
  terms. In other words, if $\sigma:\Atom\fun\NomTrm$ and $M:\NomTrm$,
  then
 \[
  \den{\sigma\subst M} \eq \dens{\sigma}\subst\den{M}
  \]
  where we write $\dens{\sigma}$ for the composition of $\sigma$ with
  $\den{\_}$. In particular, for single substitutions we have
  \[
    \den{(x := M)\subst N} \eq (x := \den{M}) \subst \den{N}
  \]
\end{proposition}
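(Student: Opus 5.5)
The plan is to prove the general statement $\den{\sigma\subst M}\eq\dens{\sigma}\subst\den{M}$ by structural induction on $M$. We prove simultaneously the analogous statement for bindings, $\denb{\sigma\subst b}\eq\dens{\sigma}\subst\denb{b}$ for $b:\NomBnd\,m$, and the obvious one for argument lists. The induction ranges over all $\sigma$: the binder case changes $\sigma$ to $\sigma\comp(x\ssubst\atm y)$, but it applies the hypothesis to the structurally smaller $b$, so structural induction suffices here, unlike in Proposition~\ref{prop:bij}. The single-substitution corollary then follows by taking $\sigma=(x\ssubst M)$. In that case $\dens{\sigma}$ agrees pointwise with $(x\ssubst\den{M})$, since $\den{\atm y}=\atm y$. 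The identification is made by \eqref{eq:13}, or simply by case analysis on $y\eq x$, because both sides compute definitionally once the decidable equality is split.

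The cases $\atm x$, $\oprn\op$, $\NIL$, $\CONS$ and $\bndNil$ are immediate from the definitions. For an atom, $\den{\sigma\,x}=\dens{\sigma}\,x=(\dens{\sigma}\,x)\wkn 0$, using the unitary law for scope weakening. The remaining cases are congruences using the induction hypotheses.

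The real work is the binder case $\bndPair{x}{b}$, where $\denb{\bndPair{x}{b}}=\abs{x}\denb{b}$ and $\sigma\subst\bndPair{x}{b}=\bndPair{y}{\sigma\comp(x\ssubst\atm y)\subst b}$ with $y=\new\bigcup\{\supp(\sigma z)\mid z\in\supp b\}$. By the induction hypothesis the left side equals $\abs{y}\bigl(\dens{\sigma\comp(x\ssubst\atm y)}\subst\denb{b}\bigr)$. Next I would observe that $\dens{\sigma\comp(x\ssubst\atm y)}$ agrees pointwise with $\dens{\sigma}\comp(x\ssubst y)$, the latter read as composing the locally nameless substitution with the renaming. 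At $x$ both give $\atm y$, and elsewhere both give $\den{\sigma z}$. By \eqref{eq:13} we may therefore rewrite the substitution, obtaining exactly the right-hand side of \eqref{eq:8}. It remains to discharge the side condition of \eqref{eq:8} with $\sigma$ replaced by $\dens{\sigma}$ and $t=\denb{b}$: for every $z\in\supp\denb{b}$ with $z\not\eq x$ we need $y\freshfor\den{\sigma z}$.

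This side condition is the main obstacle, and it needs two auxiliary support lemmas, proved by routine structural induction. The first is $\supp\den{M}\subseteq\supp M$, together with its binding version; for bindings it uses \eqref{eq:4} to see that abstraction does not enlarge support. The second is $\supp\den{N}\subseteq\supp N$, which shows that freshness for $\sigma z$ implies freshness for $\den{\sigma z}$. Given $z\in\supp\denb{b}$, the first lemma gives $z\in\supp b$. Hence $y\notin\supp(\sigma z)$ by the specification $\newNotIn$ and the membership facts for finite unions, and then $y\freshfor\den{\sigma z}$ by the second lemma. Notably, we never need $z\not\eq x$, which explains the footnote's remark that the weaker freshness choice suffices. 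The remaining effort is bookkeeping in Agda: membership in the indexed union $\bigcup\{\ldots\}$ (presumably implemented as a fold over the list underlying $\supp b$), and matching the exact shape of the composite $\sigma\comp(x\ssubst y)$ appearing in \eqref{eq:8} up to pointwise equality via \eqref{eq:13}. Functional extensionality is not available, so every such identification must go through \eqref{eq:13}.
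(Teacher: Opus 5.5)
Your proposal is correct and matches the paper's proof. The paper argues by structural induction on $M$ (together with the argument-list and binding versions), where the only non-trivial case is the binder clause. It handles that case with \eqref{eq:13} and $\supp\den{M}\subseteq\supp M$, as you do, with \eqref{eq:8} serving as the commutation of substitution with abstraction that the paper leaves implicit and you make explicit.

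There are two small points to tidy. First, your ``two'' support lemmas are one statement used twice: once to get $z\in\supp b$ from $z\in\supp\denb{b}$, and once to pass from $y\notin\supp(\sigma z)$ to $y\freshfor\den{\sigma z}$. Second, in the single-substitution corollary, case analysis on $y\eq x$ only supplies the pointwise hypothesis of \eqref{eq:13} and cannot replace it, because without function extensionality that equality test cannot be split under the binder $\lambda y$.
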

\begin{proof}(See
  \agdadef{Adequacy.Substitution}{*correct}{Substitution.*correct}.)
  The proof is by induction on the structure of $M$. The only
  non-trivial case is the one corresponding to the last clause in the
  definition of capture-avoiding substitution
  (Definition~\ref{def:capas}), where one needs the extensionality
  property~\eqref{eq:13} from Lemma~\ref{lem:conc-abs} together with
  the fact that the translation does not increase support,
  $\supp\den{M}\subseteq\supp M$, which also follows by induction on
  the structure of $M$
  (see~\agdalinkalt{Adequacy.Translation}{Translation}).
\end{proof}

% Defines \PiC \Eg \Proc

\begin{code}[hide]%
\>[0]\AgdaKeyword{open}\AgdaSpace{}%
\AgdaKeyword{import}\AgdaSpace{}%
\AgdaModule{Prelude}\<%
\end{code}
\newcommand{\PiC}{
\begin{code}%
\>[0]\AgdaKeyword{open}\AgdaSpace{}%
\AgdaKeyword{import}\AgdaSpace{}%
\AgdaModule{WSLN}\AgdaSpace{}%
\AgdaKeyword{public}\<%
\\
\>[0]\AgdaKeyword{instance}\<%
\\
\>[0][@{}l@{\AgdaIndent{0}}]%
\>[2]\AgdaFunction{PiC}\AgdaSpace{}%
\AgdaSymbol{:}\AgdaSpace{}%
\AgdaRecord{Sig}\<%
\\
\>[2]\AgdaFunction{PiC}\AgdaSpace{}%
\AgdaSymbol{=}\AgdaSpace{}%
\AgdaInductiveConstructor{mkSig}\AgdaSpace{}%
\AgdaDatatype{OpPiC}\AgdaSpace{}%
\AgdaFunction{arPiC}\<%
\\
\>[2][@{}l@{\AgdaIndent{0}}]%
\>[4]\AgdaKeyword{module}\AgdaSpace{}%
\AgdaModule{\AgdaUnderscore{}}\AgdaSpace{}%
\AgdaKeyword{where}\<%
\\
\>[4]\AgdaComment{----\ Operators}\<%
\\
\>[4]\AgdaKeyword{data}\AgdaSpace{}%
\AgdaDatatype{OpPiC}\AgdaSpace{}%
\AgdaSymbol{:}\AgdaSpace{}%
\AgdaPrimitive{Set}\AgdaSpace{}%
\AgdaKeyword{where}\<%
\\
\>[4][@{}l@{\AgdaIndent{0}}]%
\>[6]\AgdaComment{----\ input\ prefixed\ process}\<%
\\
\>[6]\AgdaInductiveConstructor{′in′}\AgdaSpace{}%
\AgdaSymbol{:}\AgdaSpace{}%
\AgdaDatatype{OpPiC}\<%
\\
\>[6]\AgdaComment{----\ output\ prefixed\ process}\<%
\\
\>[6]\AgdaInductiveConstructor{′out′}\AgdaSpace{}%
\AgdaSymbol{:}%
\>[15]\AgdaDatatype{OpPiC}\<%
\\
\>[6]\AgdaComment{----\ \ parallel\ composition}\<%
\\
\>[6]\AgdaInductiveConstructor{′par′}\AgdaSpace{}%
\AgdaSymbol{:}\AgdaSpace{}%
\AgdaDatatype{OpPiC}\<%
\\
\>[6]\AgdaComment{----\ \ restriction}\<%
\\
\>[6]\AgdaInductiveConstructor{′nu′}\AgdaSpace{}%
\AgdaSymbol{:}\AgdaSpace{}%
\AgdaDatatype{OpPiC}\<%
\\
\>[6]\AgdaComment{----\ replication}\<%
\\
\>[6]\AgdaInductiveConstructor{′repl′}\AgdaSpace{}%
\AgdaSymbol{:}\AgdaSpace{}%
\AgdaDatatype{OpPiC}\<%
\\
\>[6]\AgdaComment{----\ termination}\<%
\\
\>[6]\AgdaInductiveConstructor{′null′}\AgdaSpace{}%
\AgdaSymbol{:}\AgdaSpace{}%
\AgdaDatatype{OpPiC}\<%
\\
\\[\AgdaEmptyExtraSkip]%
\>[4]\AgdaComment{----\ Arities}\<%
\\
\>[4]\AgdaFunction{arPiC}\AgdaSpace{}%
\AgdaSymbol{:}\AgdaSpace{}%
\AgdaDatatype{OpPiC}\AgdaSpace{}%
\AgdaSymbol{→}\AgdaSpace{}%
\AgdaDatatype{List}\AgdaSpace{}%
\AgdaDatatype{ℕ}\<%
\\
\>[4]\AgdaFunction{arPiC}\AgdaSpace{}%
\AgdaInductiveConstructor{′in′}\AgdaSpace{}%
\AgdaSymbol{=}\AgdaSpace{}%
\AgdaNumber{0}\AgdaSpace{}%
\AgdaOperator{\AgdaInductiveConstructor{::}}\AgdaSpace{}%
\AgdaNumber{1}\AgdaSpace{}%
\AgdaOperator{\AgdaInductiveConstructor{::}}\AgdaSpace{}%
\AgdaInductiveConstructor{[]}\<%
\\
\>[4]\AgdaFunction{arPiC}\AgdaSpace{}%
\AgdaInductiveConstructor{′out′}\AgdaSpace{}%
\AgdaSymbol{=}\AgdaSpace{}%
\AgdaNumber{0}\AgdaSpace{}%
\AgdaOperator{\AgdaInductiveConstructor{::}}\AgdaSpace{}%
\AgdaNumber{0}\AgdaSpace{}%
\AgdaOperator{\AgdaInductiveConstructor{::}}\AgdaSpace{}%
\AgdaNumber{0}\AgdaSpace{}%
\AgdaOperator{\AgdaInductiveConstructor{::}}\AgdaSpace{}%
\AgdaInductiveConstructor{[]}\<%
\\
\>[4]\AgdaFunction{arPiC}\AgdaSpace{}%
\AgdaInductiveConstructor{′par′}\AgdaSpace{}%
\AgdaSymbol{=}\AgdaSpace{}%
\AgdaNumber{0}\AgdaSpace{}%
\AgdaOperator{\AgdaInductiveConstructor{::}}\AgdaSpace{}%
\AgdaNumber{0}\AgdaSpace{}%
\AgdaOperator{\AgdaInductiveConstructor{::}}\AgdaSpace{}%
\AgdaInductiveConstructor{[]}\<%
\\
\>[4]\AgdaFunction{arPiC}\AgdaSpace{}%
\AgdaInductiveConstructor{′nu′}\AgdaSpace{}%
\AgdaSymbol{=}\AgdaSpace{}%
\AgdaNumber{1}\AgdaSpace{}%
\AgdaOperator{\AgdaInductiveConstructor{::}}\AgdaSpace{}%
\AgdaInductiveConstructor{[]}\<%
\\
\>[4]\AgdaFunction{arPiC}\AgdaSpace{}%
\AgdaInductiveConstructor{′repl′}\AgdaSpace{}%
\AgdaSymbol{=}\AgdaSpace{}%
\AgdaNumber{0}\AgdaSpace{}%
\AgdaOperator{\AgdaInductiveConstructor{::}}\AgdaSpace{}%
\AgdaInductiveConstructor{[]}\<%
\\
\>[4]\AgdaFunction{arPiC}\AgdaSpace{}%
\AgdaInductiveConstructor{′null′}\AgdaSpace{}%
\AgdaSymbol{=}\AgdaSpace{}%
\AgdaInductiveConstructor{[]}\<%
\\
\\[\AgdaEmptyExtraSkip]%
\>[0]\AgdaComment{----\ Notation}\<%
\\
\>[0]\AgdaKeyword{infixl}\AgdaSpace{}%
\AgdaNumber{5}\AgdaSpace{}%
\AgdaOperator{\AgdaInductiveConstructor{\AgdaUnderscore{}∣\AgdaUnderscore{}}}\<%
\\
\>[0]\AgdaKeyword{pattern}\AgdaSpace{}%
\AgdaInductiveConstructor{𝐢𝐧}\AgdaSpace{}%
\AgdaBound{a}\AgdaSpace{}%
\AgdaBound{P}\AgdaSpace{}%
\AgdaSymbol{=}\AgdaSpace{}%
\AgdaInductiveConstructor{𝐨}\AgdaSpace{}%
\AgdaInductiveConstructor{′in′}\AgdaSpace{}%
\AgdaSymbol{(}\AgdaInductiveConstructor{𝐚}\AgdaSpace{}%
\AgdaBound{a}\AgdaSpace{}%
\AgdaOperator{\AgdaInductiveConstructor{::}}\AgdaSpace{}%
\AgdaBound{P}\AgdaSpace{}%
\AgdaOperator{\AgdaInductiveConstructor{::}}\AgdaSpace{}%
\AgdaInductiveConstructor{[]}\AgdaSymbol{)}\<%
\\
\>[0]\AgdaKeyword{pattern}\AgdaSpace{}%
\AgdaInductiveConstructor{𝐨𝐮𝐭}\AgdaSpace{}%
\AgdaBound{a}\AgdaSpace{}%
\AgdaBound{b}\AgdaSpace{}%
\AgdaBound{P}\AgdaSpace{}%
\AgdaSymbol{=}\AgdaSpace{}%
\AgdaInductiveConstructor{𝐨}\AgdaSpace{}%
\AgdaInductiveConstructor{′out′}\AgdaSpace{}%
\AgdaSymbol{(}\AgdaInductiveConstructor{𝐚}\AgdaSpace{}%
\AgdaBound{a}\AgdaSpace{}%
\AgdaOperator{\AgdaInductiveConstructor{::}}\AgdaSpace{}%
\AgdaInductiveConstructor{𝐚}\AgdaSpace{}%
\AgdaBound{b}\AgdaSpace{}%
\AgdaOperator{\AgdaInductiveConstructor{::}}\AgdaSpace{}%
\AgdaBound{P}\AgdaSpace{}%
\AgdaOperator{\AgdaInductiveConstructor{::}}\AgdaSpace{}%
\AgdaInductiveConstructor{[]}\AgdaSymbol{)}\<%
\\
\>[0]\AgdaKeyword{pattern}\AgdaSpace{}%
\AgdaOperator{\AgdaInductiveConstructor{\AgdaUnderscore{}∣\AgdaUnderscore{}}}\AgdaSpace{}%
\AgdaBound{P}\AgdaSpace{}%
\AgdaBound{Q}\AgdaSpace{}%
\AgdaSymbol{=}\AgdaSpace{}%
\AgdaInductiveConstructor{𝐨}\AgdaSpace{}%
\AgdaInductiveConstructor{′par′}\AgdaSpace{}%
\AgdaSymbol{(}\AgdaBound{P}\AgdaSpace{}%
\AgdaOperator{\AgdaInductiveConstructor{::}}\AgdaSpace{}%
\AgdaBound{Q}\AgdaSpace{}%
\AgdaOperator{\AgdaInductiveConstructor{::}}\AgdaSpace{}%
\AgdaInductiveConstructor{[]}\AgdaSymbol{)}\<%
\\
\>[0]\AgdaKeyword{pattern}\AgdaSpace{}%
\AgdaInductiveConstructor{ν}\AgdaSpace{}%
\AgdaBound{P}\AgdaSpace{}%
\AgdaSymbol{=}\AgdaSpace{}%
\AgdaInductiveConstructor{𝐨}\AgdaSpace{}%
\AgdaInductiveConstructor{′nu′}\AgdaSpace{}%
\AgdaSymbol{(}\AgdaBound{P}\AgdaSpace{}%
\AgdaOperator{\AgdaInductiveConstructor{::}}\AgdaSpace{}%
\AgdaInductiveConstructor{[]}\AgdaSymbol{)}\<%
\\
\>[0]\AgdaKeyword{pattern}\AgdaSpace{}%
\AgdaInductiveConstructor{‼}\AgdaSpace{}%
\AgdaBound{P}\AgdaSpace{}%
\AgdaSymbol{=}\AgdaSpace{}%
\AgdaInductiveConstructor{𝐨}\AgdaSpace{}%
\AgdaInductiveConstructor{′repl′}\AgdaSpace{}%
\AgdaSymbol{(}\AgdaBound{P}\AgdaSpace{}%
\AgdaOperator{\AgdaInductiveConstructor{::}}\AgdaSpace{}%
\AgdaInductiveConstructor{[]}\AgdaSymbol{)}\<%
\\
\>[0]\AgdaKeyword{pattern}\AgdaSpace{}%
\AgdaInductiveConstructor{𝐎}\AgdaSpace{}%
\AgdaSymbol{=}\AgdaSpace{}%
\AgdaInductiveConstructor{𝐨}\AgdaSpace{}%
\AgdaInductiveConstructor{′null′}\AgdaSpace{}%
\AgdaInductiveConstructor{[]}\<%
\end{code}}

\begin{code}[hide]%
\>[0]\AgdaFunction{eg}\AgdaSpace{}%
\AgdaSymbol{:}\AgdaSpace{}%
\AgdaFunction{Trm}\<%
\\
\>[0]\AgdaFunction{eg}\AgdaSpace{}%
\AgdaSymbol{=}\<%
\end{code}
\newcommand{\Eg}{
\begin{code}%
\>[0][@{}l@{\AgdaIndent{1}}]%
\>[2]\AgdaInductiveConstructor{𝐨}\AgdaSpace{}%
\AgdaInductiveConstructor{′out′}\AgdaSpace{}%
\AgdaSymbol{((}\AgdaInductiveConstructor{𝐨}\AgdaSpace{}%
\AgdaInductiveConstructor{′null′}\AgdaSpace{}%
\AgdaInductiveConstructor{[]}\AgdaSymbol{)}\AgdaSpace{}%
\AgdaOperator{\AgdaInductiveConstructor{::}}\AgdaSpace{}%
\AgdaSymbol{(}\AgdaInductiveConstructor{𝐨}\AgdaSpace{}%
\AgdaInductiveConstructor{′null′}\AgdaSpace{}%
\AgdaInductiveConstructor{[]}\AgdaSymbol{)}\AgdaSpace{}%
\AgdaOperator{\AgdaInductiveConstructor{::}}\AgdaSpace{}%
\AgdaSymbol{(}\AgdaInductiveConstructor{𝐨}\AgdaSpace{}%
\AgdaInductiveConstructor{′null′}\AgdaSpace{}%
\AgdaInductiveConstructor{[]}\AgdaSymbol{)}\AgdaSpace{}%
\AgdaOperator{\AgdaInductiveConstructor{::}}\AgdaSpace{}%
\AgdaInductiveConstructor{[]}\AgdaSymbol{)}\<%
\end{code}}

\newcommand{\Proc}{
\begin{code}%
\>[0]\AgdaKeyword{infix}\AgdaSpace{}%
\AgdaNumber{3}\AgdaSpace{}%
\AgdaOperator{\AgdaDatatype{⊢\AgdaUnderscore{}proc}}\<%
\\
\>[0]\AgdaKeyword{data}\AgdaSpace{}%
\AgdaOperator{\AgdaDatatype{⊢\AgdaUnderscore{}proc}}\AgdaSpace{}%
\AgdaSymbol{:}\AgdaSpace{}%
\AgdaFunction{Trm}\AgdaSpace{}%
\AgdaSymbol{→}\AgdaSpace{}%
\AgdaPrimitive{Set}\AgdaSpace{}%
\AgdaKeyword{where}\<%
\\
\>[0][@{}l@{\AgdaIndent{0}}]%
\>[2]\AgdaComment{----\ Well-formed\ processes}\<%
\\
\>[2]\AgdaInductiveConstructor{In}\AgdaSpace{}%
\AgdaSymbol{:}\AgdaSpace{}%
\AgdaComment{----\ input\ prefixed\ process}\<%
\\
\>[2][@{}l@{\AgdaIndent{0}}]%
\>[4]\AgdaSymbol{\{}\AgdaBound{a}\AgdaSpace{}%
\AgdaBound{x}\AgdaSpace{}%
\AgdaSymbol{:}\AgdaSpace{}%
\AgdaFunction{𝔸}\AgdaSymbol{\}}\<%
\\
\>[4]\AgdaSymbol{\{}\AgdaBound{P}\AgdaSpace{}%
\AgdaSymbol{:}\AgdaSpace{}%
\AgdaOperator{\AgdaDatatype{Trm[}}\AgdaSpace{}%
\AgdaNumber{1}\AgdaSpace{}%
\AgdaOperator{\AgdaDatatype{]}}\AgdaSymbol{\}}\<%
\\
\>[4]\AgdaSymbol{(}\AgdaBound{\AgdaUnderscore{}}\AgdaSpace{}%
\AgdaSymbol{:}\AgdaSpace{}%
\AgdaBound{x}\AgdaSpace{}%
\AgdaOperator{\AgdaFunction{\#}}\AgdaSpace{}%
\AgdaBound{P}\AgdaSymbol{)}\<%
\\
\>[4]\AgdaSymbol{(}\AgdaBound{\AgdaUnderscore{}}\AgdaSpace{}%
\AgdaSymbol{:}\AgdaSpace{}%
\AgdaOperator{\AgdaDatatype{⊢}}\AgdaSpace{}%
\AgdaBound{P}\AgdaSpace{}%
\AgdaOperator{\AgdaField{[}}\AgdaSpace{}%
\AgdaBound{x}\AgdaSpace{}%
\AgdaOperator{\AgdaField{]}}\AgdaSpace{}%
\AgdaOperator{\AgdaDatatype{proc}}\AgdaSymbol{)}\<%
\\
\>[4]\AgdaSymbol{→}\AgdaSpace{}%
\AgdaComment{---------------------------}\<%
\\
\>[4]\AgdaOperator{\AgdaDatatype{⊢}}\AgdaSpace{}%
\AgdaInductiveConstructor{𝐢𝐧}\AgdaSpace{}%
\AgdaBound{a}\AgdaSpace{}%
\AgdaBound{P}\AgdaSpace{}%
\AgdaOperator{\AgdaDatatype{proc}}\<%
\\
\>[2]\AgdaInductiveConstructor{Out}\AgdaSpace{}%
\AgdaSymbol{:}\AgdaSpace{}%
\AgdaComment{----\ output\ prefixed\ process}\<%
\\
\>[2][@{}l@{\AgdaIndent{0}}]%
\>[4]\AgdaSymbol{\{}\AgdaBound{a}\AgdaSpace{}%
\AgdaBound{b}\AgdaSpace{}%
\AgdaSymbol{:}\AgdaSpace{}%
\AgdaFunction{𝔸}\AgdaSymbol{\}}\<%
\\
\>[4]\AgdaSymbol{\{}\AgdaBound{P}\AgdaSpace{}%
\AgdaSymbol{:}\AgdaSpace{}%
\AgdaFunction{Trm}\AgdaSymbol{\}}\<%
\\
\>[4]\AgdaSymbol{(}\AgdaBound{\AgdaUnderscore{}}\AgdaSpace{}%
\AgdaSymbol{:}\AgdaSpace{}%
\AgdaOperator{\AgdaDatatype{⊢}}\AgdaSpace{}%
\AgdaBound{P}\AgdaSpace{}%
\AgdaOperator{\AgdaDatatype{proc}}\AgdaSymbol{)}\<%
\\
\>[4]\AgdaSymbol{→}\AgdaSpace{}%
\AgdaComment{---------------------}\<%
\\
\>[4]\AgdaOperator{\AgdaDatatype{⊢}}\AgdaSpace{}%
\AgdaInductiveConstructor{𝐨𝐮𝐭}\AgdaSpace{}%
\AgdaBound{a}\AgdaSpace{}%
\AgdaBound{b}\AgdaSpace{}%
\AgdaBound{P}\AgdaSpace{}%
\AgdaOperator{\AgdaDatatype{proc}}\<%
\\
\>[2]\AgdaInductiveConstructor{Par}\AgdaSpace{}%
\AgdaSymbol{:}\AgdaSpace{}%
\AgdaComment{----\ parallel\ composition}\<%
\\
\>[2][@{}l@{\AgdaIndent{0}}]%
\>[4]\AgdaSymbol{\{}\AgdaBound{P}\AgdaSpace{}%
\AgdaBound{Q}\AgdaSpace{}%
\AgdaSymbol{:}\AgdaSpace{}%
\AgdaFunction{Trm}\AgdaSymbol{\}}\<%
\\
\>[4]\AgdaSymbol{(}\AgdaBound{\AgdaUnderscore{}}\AgdaSpace{}%
\AgdaSymbol{:}\AgdaSpace{}%
\AgdaOperator{\AgdaDatatype{⊢}}\AgdaSpace{}%
\AgdaBound{P}\AgdaSpace{}%
\AgdaOperator{\AgdaDatatype{proc}}\AgdaSymbol{)}\<%
\\
\>[4]\AgdaSymbol{(}\AgdaBound{\AgdaUnderscore{}}\AgdaSpace{}%
\AgdaSymbol{:}\AgdaSpace{}%
\AgdaOperator{\AgdaDatatype{⊢}}\AgdaSpace{}%
\AgdaBound{Q}\AgdaSpace{}%
\AgdaOperator{\AgdaDatatype{proc}}\AgdaSymbol{)}\<%
\\
\>[4]\AgdaSymbol{→}\AgdaSpace{}%
\AgdaComment{-------------------}\<%
\\
\>[4]\AgdaOperator{\AgdaDatatype{⊢}}\AgdaSpace{}%
\AgdaBound{P}\AgdaSpace{}%
\AgdaOperator{\AgdaInductiveConstructor{∣}}\AgdaSpace{}%
\AgdaBound{Q}\AgdaSpace{}%
\AgdaOperator{\AgdaDatatype{proc}}\<%
\\
\>[2]\AgdaInductiveConstructor{Nu}\AgdaSpace{}%
\AgdaSymbol{:}\AgdaSpace{}%
\AgdaComment{----\ channel\ restriction}\<%
\\
\>[2][@{}l@{\AgdaIndent{0}}]%
\>[4]\AgdaSymbol{\{}\AgdaBound{x}\AgdaSpace{}%
\AgdaSymbol{:}\AgdaSpace{}%
\AgdaFunction{𝔸}\AgdaSymbol{\}}\<%
\\
\>[4]\AgdaSymbol{\{}\AgdaBound{P}\AgdaSpace{}%
\AgdaSymbol{:}\AgdaSpace{}%
\AgdaOperator{\AgdaDatatype{Trm[}}\AgdaSpace{}%
\AgdaNumber{1}\AgdaSpace{}%
\AgdaOperator{\AgdaDatatype{]}}\AgdaSymbol{\}}\<%
\\
\>[4]\AgdaSymbol{(}\AgdaBound{\AgdaUnderscore{}}\AgdaSpace{}%
\AgdaSymbol{:}\AgdaSpace{}%
\AgdaBound{x}\AgdaSpace{}%
\AgdaOperator{\AgdaFunction{\#}}\AgdaSpace{}%
\AgdaBound{P}\AgdaSymbol{)}\<%
\\
\>[4]\AgdaSymbol{(}\AgdaBound{\AgdaUnderscore{}}\AgdaSpace{}%
\AgdaSymbol{:}\AgdaSpace{}%
\AgdaOperator{\AgdaDatatype{⊢}}\AgdaSpace{}%
\AgdaBound{P}\AgdaSpace{}%
\AgdaOperator{\AgdaField{[}}\AgdaSpace{}%
\AgdaBound{x}\AgdaSpace{}%
\AgdaOperator{\AgdaField{]}}\AgdaSpace{}%
\AgdaOperator{\AgdaDatatype{proc}}\AgdaSymbol{)}\<%
\\
\>[4]\AgdaSymbol{→}\AgdaSpace{}%
\AgdaComment{---------------------------}\<%
\\
\>[4]\AgdaOperator{\AgdaDatatype{⊢}}\AgdaSpace{}%
\AgdaInductiveConstructor{ν}\AgdaSpace{}%
\AgdaBound{P}\AgdaSpace{}%
\AgdaOperator{\AgdaDatatype{proc}}\<%
\\
\>[2]\AgdaInductiveConstructor{Repl}\AgdaSpace{}%
\AgdaSymbol{:}\AgdaSpace{}%
\AgdaComment{----\ process\ replication}\<%
\\
\>[2][@{}l@{\AgdaIndent{0}}]%
\>[4]\AgdaSymbol{\{}\AgdaBound{P}\AgdaSpace{}%
\AgdaSymbol{:}\AgdaSpace{}%
\AgdaFunction{Trm}\AgdaSymbol{\}}\<%
\\
\>[4]\AgdaSymbol{(}\AgdaBound{\AgdaUnderscore{}}\AgdaSpace{}%
\AgdaSymbol{:}\AgdaSpace{}%
\AgdaOperator{\AgdaDatatype{⊢}}\AgdaSpace{}%
\AgdaBound{P}\AgdaSpace{}%
\AgdaOperator{\AgdaDatatype{proc}}\AgdaSymbol{)}\<%
\\
\>[4]\AgdaSymbol{→}\AgdaSpace{}%
\AgdaComment{--------------------}\<%
\\
\>[4]\AgdaOperator{\AgdaDatatype{⊢}}\AgdaSpace{}%
\AgdaInductiveConstructor{‼}\AgdaSpace{}%
\AgdaBound{P}\AgdaSpace{}%
\AgdaOperator{\AgdaDatatype{proc}}\<%
\\
\>[2]\AgdaInductiveConstructor{Null}\AgdaSpace{}%
\AgdaSymbol{:}\AgdaSpace{}%
\AgdaOperator{\AgdaDatatype{⊢}}\AgdaSpace{}%
\AgdaInductiveConstructor{𝐎}\AgdaSpace{}%
\AgdaOperator{\AgdaDatatype{proc}}\AgdaSpace{}%
\AgdaComment{----\ terminated\ process}\<%
\end{code}}

\begin{figure}
  \begin{spacing}{0.9}
    \PiC
  \end{spacing}
  \caption{Binding signature and concrete syntax for $\pi$-calculus
    processes}
  \label{fig:pic}
\end{figure}

% \begin{figure}
%  \begin{spacing}{0.9}
%     \Proc
%   \end{spacing}
%   \caption{Well-formed $\pi$-calculus processes}
%   \label{fig:proc}
% \end{figure}

%%%%%%%%%%%%%%%%%%%%%%%%%%%%%%%%%%%%%%%%%%%%%%%%%%%%%%%%%%%%%%%%%%%%%%
\section{Examples}
\label{sec:exa}

We conclude with three examples of binding signatures. They allow us
to make various points about the \emph{pros} and \emph{cons} of the
well-scoped locally nameless representation of syntax.

\begin{example}[\agdalink{PiCalc}]
  \label{exa:pic}
  Most of the literature on
  formalizing syntax involving bound names concentrates on languages
  and type systems based on $\lambda$-calculus, where names stand for
  logical variables that can be substituted by terms of the
  language. Programming languages often feature names for other kinds
  of resources that can be bound to syntactic scopes. So to emphasize
  that \agdalink{WSLN} is a library not just for $\lambda$-calculus,
  our first example involves the $\pi$-calculus~\cite{MilnerR:calmp}
  where names stand for channels on which names are communicated. For
  illustrative purposes we consider a core calculus (with process
  replication, but without summation or name matching). The Agda
  declaration of a suitable binding signature is shown in
  Figure~\ref{fig:pic}, together with some concrete syntax to ease
  writing terms over the signature, defined using Agda's
  \AgdaKeyword{pattern} declaration. We give the definitions in full
  to illustrate the common pattern for introducing a signature.

  For this signature, the values of type \AgdaDatatype{Trm} are terms
  that make no distinction between process expressions and channel
  expressions and hence can be ill-formed from the point of view of
  the $\pi$-calculus. For example %
  \Eg %
  is a value of type \AgdaDatatype{Trm} that contains process
  expressions in the first two arguments of %
  \AgdaInductiveConstructor{𝐨}\AgdaSpace{}%
  \AgdaInductiveConstructor{′out′}\AgdaSpace{}%
  where channel names are expected. So an inductively defined
  judgement \AgdaFunction{⊢}$\,P\,$\AgdaDatatype{proc} that $P$ is a
  well-formed process expression is needed;
  see~\agdalink{PiCalc}. One can argue that it would be better
  to let Agda's type system take care of such well-formedness
  conditions as part of the declaration of the signature. To do so
  would require a notion of signature with \emph{sorts}, such as the
  \emph{abstract binding trees} used by \citet{HarperR:prafpl}, or the notion of
  \emph{nominal signature} from \cite{PittsAM:alpsri}, which allows
  for different sorts of name as well as data. We have chosen to use the
  simple, unsorted notion of binding signature from \cite{PlotkinGD:illtr} for
  \agdalink{WSLN}, since in many instances one needs bespoke forms of
  well-formedness (such as some form of type system) that are beyond
  the scope of simple sorting systems.
\end{example}

% Defines \NATREC \ARITY \PATTERN

\begin{code}[hide]%
\>[0]\AgdaKeyword{module}\AgdaSpace{}%
\AgdaModule{ex2}\AgdaSpace{}%
\AgdaKeyword{where}\<%
\\
\\[\AgdaEmptyExtraSkip]%
\>[0]\AgdaKeyword{open}\AgdaSpace{}%
\AgdaKeyword{import}\AgdaSpace{}%
\AgdaModule{Prelude}\<%
\\
\>[0]\AgdaKeyword{open}\AgdaSpace{}%
\AgdaKeyword{import}\AgdaSpace{}%
\AgdaModule{WSLN}\<%
\\
\>[0]\AgdaComment{----------------------------------------------------------------------}\<%
\\
\>[0]\AgdaComment{--\ Universe\ levels}\<%
\\
\>[0]\AgdaComment{----------------------------------------------------------------------}\<%
\\
\>[0]\AgdaFunction{Lvl}\AgdaSpace{}%
\AgdaSymbol{:}\AgdaSpace{}%
\AgdaPrimitive{Set}\<%
\\
\\[\AgdaEmptyExtraSkip]%
\>[0]\AgdaFunction{Lvl}\AgdaSpace{}%
\AgdaSymbol{=}\AgdaSpace{}%
\AgdaDatatype{ℕ}\<%
\\
\>[0]\AgdaComment{----------------------------------------------------------------------}\<%
\\
\>[0]\AgdaComment{--\ Signature\ for\ types\ and\ terms}\<%
\\
\>[0]\AgdaComment{----------------------------------------------------------------------}\<%
\\
\>[0]\AgdaComment{--\ Operators}\<%
\\
\>[0]\AgdaKeyword{data}\AgdaSpace{}%
\AgdaDatatype{OpMLTT}\AgdaSpace{}%
\AgdaSymbol{:}\AgdaSpace{}%
\AgdaPrimitive{Set}\AgdaSpace{}%
\AgdaKeyword{where}\<%
\\
\>[0][@{}l@{\AgdaIndent{0}}]%
\>[2]\AgdaComment{--\ Universe\ type}\<%
\\
\>[2]\AgdaInductiveConstructor{′Univ′}\AgdaSpace{}%
\AgdaSymbol{:}\AgdaSpace{}%
\AgdaFunction{Lvl}\AgdaSpace{}%
\AgdaSymbol{→}\AgdaSpace{}%
\AgdaDatatype{OpMLTT}\<%
\\
\>[2]\AgdaComment{--\ Dependent\ function\ type}\<%
\\
\>[2]\AgdaInductiveConstructor{′Pi′}\AgdaSpace{}%
\AgdaSymbol{:}%
\>[10]\AgdaDatatype{OpMLTT}\<%
\\
\>[2]\AgdaComment{--\ Function\ abstraction}\<%
\\
\>[2]\AgdaInductiveConstructor{′lam′}\AgdaSpace{}%
\AgdaSymbol{:}%
\>[11]\AgdaDatatype{OpMLTT}\<%
\\
\>[2]\AgdaComment{--\ Function\ application}\<%
\\
\>[2]\AgdaInductiveConstructor{′app′}\AgdaSpace{}%
\AgdaSymbol{:}%
\>[11]\AgdaDatatype{OpMLTT}\<%
\\
\>[2]\AgdaComment{--\ Identity\ type}\<%
\\
\>[2]\AgdaInductiveConstructor{′Id′}\AgdaSpace{}%
\AgdaSymbol{:}\AgdaSpace{}%
\AgdaDatatype{OpMLTT}\<%
\\
\>[2]\AgdaComment{--\ Reflexivity\ proof}\<%
\\
\>[2]\AgdaInductiveConstructor{′refl′}\AgdaSpace{}%
\AgdaSymbol{:}\AgdaSpace{}%
\AgdaDatatype{OpMLTT}\<%
\\
\>[2]\AgdaComment{--\ Identity\ elimination}\<%
\\
\>[2]\AgdaInductiveConstructor{′J′}\AgdaSpace{}%
\AgdaSymbol{:}\AgdaSpace{}%
\AgdaDatatype{OpMLTT}\<%
\\
\>[2]\AgdaComment{--\ Natural\ number\ type}\<%
\\
\>[2]\AgdaInductiveConstructor{′Nat′}\AgdaSpace{}%
\AgdaSymbol{:}\AgdaSpace{}%
\AgdaDatatype{OpMLTT}\<%
\\
\>[2]\AgdaComment{--\ Zero}\<%
\\
\>[2]\AgdaInductiveConstructor{′zero′}\AgdaSpace{}%
\AgdaSymbol{:}\AgdaSpace{}%
\AgdaDatatype{OpMLTT}\<%
\\
\>[2]\AgdaComment{--\ Successor}\<%
\\
\>[2]\AgdaInductiveConstructor{′succ′}\AgdaSpace{}%
\AgdaSymbol{:}\AgdaSpace{}%
\AgdaDatatype{OpMLTT}\<%
\\
\>[2]\AgdaComment{--\ Natural\ number\ elimination}\<%
\end{code}
\newcommand{\NATREC}{
\begin{code}%
\>[2]\AgdaInductiveConstructor{′natrec′}\AgdaSpace{}%
\AgdaSymbol{:}\AgdaSpace{}%
\AgdaDatatype{OpMLTT}\<%
\end{code}}
\begin{code}[hide]%
\>[0]\AgdaComment{--\ Arities}\<%
\\
\>[0]\AgdaFunction{arMLTT}\AgdaSpace{}%
\AgdaSymbol{:}\AgdaSpace{}%
\AgdaDatatype{OpMLTT}\AgdaSpace{}%
\AgdaSymbol{→}\AgdaSpace{}%
\AgdaDatatype{List}\AgdaSpace{}%
\AgdaDatatype{ℕ}\<%
\\
\>[0]\AgdaFunction{arMLTT}\AgdaSpace{}%
\AgdaSymbol{(}\AgdaInductiveConstructor{′Univ′}\AgdaSpace{}%
\AgdaBound{ℓ}\AgdaSymbol{)}\AgdaSpace{}%
\AgdaSymbol{=}\AgdaSpace{}%
\AgdaInductiveConstructor{[]}\<%
\\
\>[0]\AgdaFunction{arMLTT}\AgdaSpace{}%
\AgdaInductiveConstructor{′Pi′}\AgdaSpace{}%
\AgdaSymbol{=}\AgdaSpace{}%
\AgdaNumber{0}\AgdaSpace{}%
\AgdaOperator{\AgdaInductiveConstructor{::}}\AgdaSpace{}%
\AgdaNumber{1}\AgdaSpace{}%
\AgdaOperator{\AgdaInductiveConstructor{::}}\AgdaSpace{}%
\AgdaInductiveConstructor{[]}\<%
\\
\>[0]\AgdaFunction{arMLTT}\AgdaSpace{}%
\AgdaInductiveConstructor{′lam′}\AgdaSpace{}%
\AgdaSymbol{=}\AgdaSpace{}%
\AgdaNumber{0}\AgdaSpace{}%
\AgdaOperator{\AgdaInductiveConstructor{::}}\AgdaSpace{}%
\AgdaNumber{1}\AgdaSpace{}%
\AgdaOperator{\AgdaInductiveConstructor{::}}\AgdaSpace{}%
\AgdaInductiveConstructor{[]}\<%
\\
\>[0]\AgdaFunction{arMLTT}\AgdaSpace{}%
\AgdaInductiveConstructor{′app′}\AgdaSpace{}%
\AgdaSymbol{=}\AgdaSpace{}%
\AgdaNumber{0}\AgdaSpace{}%
\AgdaOperator{\AgdaInductiveConstructor{::}}\AgdaSpace{}%
\AgdaNumber{0}\AgdaSpace{}%
\AgdaOperator{\AgdaInductiveConstructor{::}}\AgdaSpace{}%
\AgdaNumber{1}\AgdaSpace{}%
\AgdaOperator{\AgdaInductiveConstructor{::}}\AgdaSpace{}%
\AgdaNumber{0}\AgdaSpace{}%
\AgdaOperator{\AgdaInductiveConstructor{::}}\AgdaSpace{}%
\AgdaInductiveConstructor{[]}\<%
\\
\>[0]\AgdaFunction{arMLTT}\AgdaSpace{}%
\AgdaInductiveConstructor{′Id′}\AgdaSpace{}%
\AgdaSymbol{=}\AgdaSpace{}%
\AgdaNumber{0}\AgdaSpace{}%
\AgdaOperator{\AgdaInductiveConstructor{::}}\AgdaSpace{}%
\AgdaNumber{0}\AgdaSpace{}%
\AgdaOperator{\AgdaInductiveConstructor{::}}\AgdaSpace{}%
\AgdaNumber{0}\AgdaSpace{}%
\AgdaOperator{\AgdaInductiveConstructor{::}}\AgdaSpace{}%
\AgdaInductiveConstructor{[]}\<%
\\
\>[0]\AgdaFunction{arMLTT}\AgdaSpace{}%
\AgdaInductiveConstructor{′refl′}\AgdaSpace{}%
\AgdaSymbol{=}\AgdaSpace{}%
\AgdaNumber{0}\AgdaSpace{}%
\AgdaOperator{\AgdaInductiveConstructor{::}}\AgdaSpace{}%
\AgdaInductiveConstructor{[]}\<%
\\
\>[0]\AgdaFunction{arMLTT}\AgdaSpace{}%
\AgdaInductiveConstructor{′J′}\AgdaSpace{}%
\AgdaSymbol{=}\AgdaSpace{}%
\AgdaNumber{2}\AgdaSpace{}%
\AgdaOperator{\AgdaInductiveConstructor{::}}\AgdaSpace{}%
\AgdaNumber{0}\AgdaSpace{}%
\AgdaOperator{\AgdaInductiveConstructor{::}}\AgdaSpace{}%
\AgdaNumber{0}\AgdaSpace{}%
\AgdaOperator{\AgdaInductiveConstructor{::}}\AgdaSpace{}%
\AgdaNumber{0}\AgdaSpace{}%
\AgdaOperator{\AgdaInductiveConstructor{::}}\AgdaSpace{}%
\AgdaNumber{0}\AgdaSpace{}%
\AgdaOperator{\AgdaInductiveConstructor{::}}\AgdaSpace{}%
\AgdaInductiveConstructor{[]}\<%
\\
\>[0]\AgdaFunction{arMLTT}\AgdaSpace{}%
\AgdaInductiveConstructor{′Nat′}\AgdaSpace{}%
\AgdaSymbol{=}\AgdaSpace{}%
\AgdaInductiveConstructor{[]}\<%
\\
\>[0]\AgdaFunction{arMLTT}\AgdaSpace{}%
\AgdaInductiveConstructor{′zero′}\AgdaSpace{}%
\AgdaSymbol{=}\AgdaSpace{}%
\AgdaInductiveConstructor{[]}\<%
\\
\>[0]\AgdaFunction{arMLTT}\AgdaSpace{}%
\AgdaInductiveConstructor{′succ′}\AgdaSpace{}%
\AgdaSymbol{=}\AgdaSpace{}%
\AgdaNumber{0}\AgdaSpace{}%
\AgdaOperator{\AgdaInductiveConstructor{::}}\AgdaSpace{}%
\AgdaInductiveConstructor{[]}\<%
\\
\>[0]\AgdaFunction{arMLTT}\AgdaSpace{}%
\AgdaInductiveConstructor{′natrec′}\AgdaSpace{}%
\AgdaSymbol{=}\<%
\end{code}
\newcommand{\ARITY}{
\begin{code}%
\>[0][@{}l@{\AgdaIndent{1}}]%
\>[2]\AgdaNumber{1}\AgdaSpace{}%
\AgdaOperator{\AgdaInductiveConstructor{::}}\AgdaSpace{}%
\AgdaNumber{0}\AgdaSpace{}%
\AgdaOperator{\AgdaInductiveConstructor{::}}\AgdaSpace{}%
\AgdaNumber{2}\AgdaSpace{}%
\AgdaOperator{\AgdaInductiveConstructor{::}}\AgdaSpace{}%
\AgdaNumber{0}\AgdaSpace{}%
\AgdaOperator{\AgdaInductiveConstructor{::}}\AgdaSpace{}%
\AgdaInductiveConstructor{[]}\<%
\end{code}}
\begin{code}[hide]%
\>[0]\AgdaKeyword{instance}\<%
\\
\>[0][@{}l@{\AgdaIndent{0}}]%
\>[2]\AgdaFunction{MLTT}\AgdaSpace{}%
\AgdaSymbol{:}\AgdaSpace{}%
\AgdaRecord{Sig}\<%
\\
\\[\AgdaEmptyExtraSkip]%
\>[2]\AgdaField{Op}\AgdaSpace{}%
\AgdaFunction{MLTT}\AgdaSpace{}%
\AgdaSymbol{=}\AgdaSpace{}%
\AgdaDatatype{OpMLTT}\<%
\\
\>[2]\AgdaField{ar}\AgdaSpace{}%
\AgdaFunction{MLTT}\AgdaSpace{}%
\AgdaSymbol{=}\AgdaSpace{}%
\AgdaFunction{arMLTT}\<%
\\
\\[\AgdaEmptyExtraSkip]%
\>[0]\AgdaComment{----------------------------------------------------------------------}\<%
\\
\>[0]\AgdaComment{--\ Terms\ of\ Martin-Löf\ Type\ Theory}\<%
\\
\>[0]\AgdaComment{----------------------------------------------------------------------}\<%
\\
\>[0]\AgdaKeyword{infix}\AgdaSpace{}%
\AgdaNumber{6}\AgdaSpace{}%
\AgdaOperator{\AgdaFunction{Tm[\AgdaUnderscore{}]}}\<%
\\
\>[0]\AgdaOperator{\AgdaFunction{Tm[\AgdaUnderscore{}]}}\AgdaSpace{}%
\AgdaSymbol{:}\AgdaSpace{}%
\AgdaDatatype{ℕ}\AgdaSpace{}%
\AgdaSymbol{→}\AgdaSpace{}%
\AgdaPrimitive{Set}\<%
\\
\\[\AgdaEmptyExtraSkip]%
\>[0]\AgdaOperator{\AgdaFunction{Tm[}}\AgdaSpace{}%
\AgdaBound{n}\AgdaSpace{}%
\AgdaOperator{\AgdaFunction{]}}\AgdaSpace{}%
\AgdaSymbol{=}\AgdaSpace{}%
\AgdaOperator{\AgdaDatatype{Trm[\AgdaUnderscore{}]}}\AgdaSpace{}%
\AgdaSymbol{⦃}\AgdaSpace{}%
\AgdaFunction{MLTT}\AgdaSpace{}%
\AgdaSymbol{⦄}\AgdaSpace{}%
\AgdaBound{n}\<%
\\
\\[\AgdaEmptyExtraSkip]%
\>[0]\AgdaFunction{Tm}\AgdaSpace{}%
\AgdaSymbol{:}\AgdaSpace{}%
\AgdaPrimitive{Set}\<%
\\
\\[\AgdaEmptyExtraSkip]%
\>[0]\AgdaFunction{Tm}\AgdaSpace{}%
\AgdaSymbol{=}\AgdaSpace{}%
\AgdaOperator{\AgdaDatatype{Trm[\AgdaUnderscore{}]}}\AgdaSpace{}%
\AgdaSymbol{⦃}\AgdaSpace{}%
\AgdaFunction{MLTT}\AgdaSpace{}%
\AgdaSymbol{⦄}\AgdaSpace{}%
\AgdaNumber{0}\<%
\\
\\[\AgdaEmptyExtraSkip]%
\>[0]\AgdaComment{--\ Types\ are\ particular\ kinds\ of\ term}\<%
\\
\>[0]\AgdaKeyword{infix}\AgdaSpace{}%
\AgdaNumber{6}\AgdaSpace{}%
\AgdaOperator{\AgdaFunction{Ty[\AgdaUnderscore{}]}}\<%
\\
\>[0]\AgdaOperator{\AgdaFunction{Ty[\AgdaUnderscore{}]}}\AgdaSpace{}%
\AgdaSymbol{:}\AgdaSpace{}%
\AgdaDatatype{ℕ}\AgdaSpace{}%
\AgdaSymbol{→}\AgdaSpace{}%
\AgdaPrimitive{Set}\<%
\\
\\[\AgdaEmptyExtraSkip]%
\>[0]\AgdaOperator{\AgdaFunction{Ty[}}\AgdaSpace{}%
\AgdaBound{n}\AgdaSpace{}%
\AgdaOperator{\AgdaFunction{]}}\AgdaSpace{}%
\AgdaSymbol{=}\AgdaSpace{}%
\AgdaOperator{\AgdaFunction{Tm[}}\AgdaSpace{}%
\AgdaBound{n}\AgdaSpace{}%
\AgdaOperator{\AgdaFunction{]}}\<%
\\
\\[\AgdaEmptyExtraSkip]%
\>[0]\AgdaFunction{Ty}\AgdaSpace{}%
\AgdaSymbol{:}\AgdaSpace{}%
\AgdaPrimitive{Set}\<%
\\
\\[\AgdaEmptyExtraSkip]%
\>[0]\AgdaFunction{Ty}\AgdaSpace{}%
\AgdaSymbol{=}\AgdaSpace{}%
\AgdaFunction{Tm}\<%
\\
\\[\AgdaEmptyExtraSkip]%
\>[0]\AgdaComment{----------------------------------------------------------------------}\<%
\\
\>[0]\AgdaComment{--\ Notation}\<%
\\
\>[0]\AgdaComment{----------------------------------------------------------------------}\<%
\\
\>[0]\AgdaKeyword{infix}\AgdaSpace{}%
\AgdaNumber{7}\AgdaSpace{}%
\AgdaOperator{\AgdaInductiveConstructor{\AgdaUnderscore{}∙[\AgdaUnderscore{},\AgdaUnderscore{}]\AgdaUnderscore{}}}\<%
\\
\>[0]\AgdaKeyword{pattern}\AgdaSpace{}%
\AgdaInductiveConstructor{𝐯}\AgdaSpace{}%
\AgdaBound{x}\AgdaSpace{}%
\AgdaSymbol{=}\AgdaSpace{}%
\AgdaInductiveConstructor{𝐚}\AgdaSpace{}%
\AgdaBound{x}\<%
\\
\>[0]\AgdaKeyword{pattern}\AgdaSpace{}%
\AgdaInductiveConstructor{𝐔}\AgdaSpace{}%
\AgdaBound{l}\AgdaSpace{}%
\AgdaSymbol{=}\AgdaSpace{}%
\AgdaInductiveConstructor{𝐨}\AgdaSpace{}%
\AgdaSymbol{(}\AgdaInductiveConstructor{′Univ′}\AgdaSpace{}%
\AgdaBound{l}\AgdaSymbol{)}\AgdaSpace{}%
\AgdaInductiveConstructor{[]}\<%
\\
\>[0]\AgdaKeyword{pattern}\AgdaSpace{}%
\AgdaInductiveConstructor{𝚷}\AgdaSpace{}%
\AgdaBound{A}\AgdaSpace{}%
\AgdaBound{B}\AgdaSpace{}%
\AgdaSymbol{=}\AgdaSpace{}%
\AgdaInductiveConstructor{𝐨}\AgdaSpace{}%
\AgdaInductiveConstructor{′Pi′}\AgdaSpace{}%
\AgdaSymbol{(}\AgdaBound{A}\AgdaSpace{}%
\AgdaOperator{\AgdaInductiveConstructor{::}}\AgdaSpace{}%
\AgdaBound{B}\AgdaSpace{}%
\AgdaOperator{\AgdaInductiveConstructor{::}}\AgdaSpace{}%
\AgdaInductiveConstructor{[]}\AgdaSymbol{)}\<%
\\
\>[0]\AgdaKeyword{pattern}\AgdaSpace{}%
\AgdaInductiveConstructor{𝛌}\AgdaSpace{}%
\AgdaBound{A}\AgdaSpace{}%
\AgdaBound{f}\AgdaSpace{}%
\AgdaSymbol{=}\AgdaSpace{}%
\AgdaInductiveConstructor{𝐨}\AgdaSpace{}%
\AgdaInductiveConstructor{′lam′}\AgdaSpace{}%
\AgdaSymbol{(}\AgdaBound{A}\AgdaSpace{}%
\AgdaOperator{\AgdaInductiveConstructor{::}}\AgdaSpace{}%
\AgdaBound{f}\AgdaSpace{}%
\AgdaOperator{\AgdaInductiveConstructor{::}}\AgdaSpace{}%
\AgdaInductiveConstructor{[]}\AgdaSymbol{)}\<%
\\
\>[0]\AgdaKeyword{pattern}\AgdaSpace{}%
\AgdaOperator{\AgdaInductiveConstructor{\AgdaUnderscore{}∙[\AgdaUnderscore{},\AgdaUnderscore{}]\AgdaUnderscore{}}}\AgdaSpace{}%
\AgdaBound{b}\AgdaSpace{}%
\AgdaBound{A}\AgdaSpace{}%
\AgdaBound{B}\AgdaSpace{}%
\AgdaBound{a}\AgdaSpace{}%
\AgdaSymbol{=}\AgdaSpace{}%
\AgdaInductiveConstructor{𝐨}\AgdaSpace{}%
\AgdaInductiveConstructor{′app′}\AgdaSpace{}%
\AgdaSymbol{(}\AgdaBound{b}\AgdaSpace{}%
\AgdaOperator{\AgdaInductiveConstructor{::}}\AgdaSpace{}%
\AgdaBound{A}\AgdaSpace{}%
\AgdaOperator{\AgdaInductiveConstructor{::}}\AgdaSpace{}%
\AgdaBound{B}\AgdaSpace{}%
\AgdaOperator{\AgdaInductiveConstructor{::}}\AgdaSpace{}%
\AgdaBound{a}\AgdaSpace{}%
\AgdaOperator{\AgdaInductiveConstructor{::}}\AgdaSpace{}%
\AgdaInductiveConstructor{[]}\AgdaSymbol{)}\<%
\\
\>[0]\AgdaKeyword{pattern}\AgdaSpace{}%
\AgdaInductiveConstructor{𝐈𝐝}\AgdaSpace{}%
\AgdaBound{A}\AgdaSpace{}%
\AgdaBound{a}\AgdaSpace{}%
\AgdaBound{a'}\AgdaSpace{}%
\AgdaSymbol{=}\AgdaSpace{}%
\AgdaInductiveConstructor{𝐨}\AgdaSpace{}%
\AgdaInductiveConstructor{′Id′}\AgdaSpace{}%
\AgdaSymbol{(}\AgdaBound{A}\AgdaSpace{}%
\AgdaOperator{\AgdaInductiveConstructor{::}}\AgdaSpace{}%
\AgdaBound{a}\AgdaSpace{}%
\AgdaOperator{\AgdaInductiveConstructor{::}}\AgdaSpace{}%
\AgdaBound{a'}\AgdaSpace{}%
\AgdaOperator{\AgdaInductiveConstructor{::}}\AgdaSpace{}%
\AgdaInductiveConstructor{[]}\AgdaSpace{}%
\AgdaSymbol{)}\<%
\\
\>[0]\AgdaKeyword{pattern}\AgdaSpace{}%
\AgdaInductiveConstructor{𝐫𝐞𝐟𝐥}\AgdaSpace{}%
\AgdaBound{a}\AgdaSpace{}%
\AgdaSymbol{=}\AgdaSpace{}%
\AgdaInductiveConstructor{𝐨}\AgdaSpace{}%
\AgdaInductiveConstructor{′refl′}\AgdaSpace{}%
\AgdaSymbol{(}\AgdaBound{a}\AgdaSpace{}%
\AgdaOperator{\AgdaInductiveConstructor{::}}\AgdaSpace{}%
\AgdaInductiveConstructor{[]}\AgdaSymbol{)}\<%
\\
\>[0]\AgdaKeyword{pattern}\AgdaSpace{}%
\AgdaInductiveConstructor{𝐉}\AgdaSpace{}%
\AgdaBound{C}\AgdaSpace{}%
\AgdaBound{a}\AgdaSpace{}%
\AgdaBound{b}\AgdaSpace{}%
\AgdaBound{c}\AgdaSpace{}%
\AgdaBound{e}\AgdaSpace{}%
\AgdaSymbol{=}\AgdaSpace{}%
\AgdaInductiveConstructor{𝐨}\AgdaSpace{}%
\AgdaInductiveConstructor{′J′}\AgdaSpace{}%
\AgdaSymbol{(}\AgdaBound{C}\AgdaSpace{}%
\AgdaOperator{\AgdaInductiveConstructor{::}}\AgdaSpace{}%
\AgdaBound{a}\AgdaSpace{}%
\AgdaOperator{\AgdaInductiveConstructor{::}}\AgdaSpace{}%
\AgdaBound{b}\AgdaSpace{}%
\AgdaOperator{\AgdaInductiveConstructor{::}}\AgdaSpace{}%
\AgdaBound{c}\AgdaSpace{}%
\AgdaOperator{\AgdaInductiveConstructor{::}}\AgdaSpace{}%
\AgdaBound{e}\AgdaSpace{}%
\AgdaOperator{\AgdaInductiveConstructor{::}}\AgdaSpace{}%
\AgdaInductiveConstructor{[]}\AgdaSymbol{)}\<%
\\
\>[0]\AgdaKeyword{pattern}\AgdaSpace{}%
\AgdaInductiveConstructor{𝐍𝐚𝐭}\AgdaSpace{}%
\AgdaSymbol{=}\AgdaSpace{}%
\AgdaInductiveConstructor{𝐨}\AgdaSpace{}%
\AgdaInductiveConstructor{′Nat′}\AgdaSpace{}%
\AgdaInductiveConstructor{[]}\<%
\\
\>[0]\AgdaKeyword{pattern}\AgdaSpace{}%
\AgdaInductiveConstructor{𝐳𝐞𝐫𝐨}\AgdaSpace{}%
\AgdaSymbol{=}\AgdaSpace{}%
\AgdaInductiveConstructor{𝐨}\AgdaSpace{}%
\AgdaInductiveConstructor{′zero′}\AgdaSpace{}%
\AgdaInductiveConstructor{[]}\<%
\\
\>[0]\AgdaKeyword{pattern}\AgdaSpace{}%
\AgdaInductiveConstructor{𝐬𝐮𝐜𝐜}\AgdaSpace{}%
\AgdaBound{a}\AgdaSpace{}%
\AgdaSymbol{=}\AgdaSpace{}%
\AgdaInductiveConstructor{𝐨}\AgdaSpace{}%
\AgdaInductiveConstructor{′succ′}\AgdaSpace{}%
\AgdaSymbol{(}\AgdaBound{a}\AgdaSpace{}%
\AgdaOperator{\AgdaInductiveConstructor{::}}\AgdaSpace{}%
\AgdaInductiveConstructor{[]}\AgdaSymbol{)}\<%
\end{code}
\newcommand{\PATTERN}{
\begin{code}%
\>[0]\AgdaKeyword{pattern}\AgdaSpace{}%
\AgdaInductiveConstructor{𝐧𝐫𝐞𝐜}\AgdaSpace{}%
\AgdaBound{C}\AgdaSpace{}%
\AgdaBound{c}\AgdaSpace{}%
\AgdaBound{d}\AgdaSpace{}%
\AgdaBound{a}\AgdaSpace{}%
\AgdaSymbol{=}\AgdaSpace{}%
\AgdaInductiveConstructor{𝐨}\AgdaSpace{}%
\AgdaInductiveConstructor{′natrec′}\AgdaSpace{}%
\AgdaSymbol{(}\AgdaBound{C}\AgdaSpace{}%
\AgdaOperator{\AgdaInductiveConstructor{::}}\AgdaSpace{}%
\AgdaBound{c}\AgdaSpace{}%
\AgdaOperator{\AgdaInductiveConstructor{::}}\AgdaSpace{}%
\AgdaBound{d}\AgdaSpace{}%
\AgdaOperator{\AgdaInductiveConstructor{::}}\AgdaSpace{}%
\AgdaBound{a}\AgdaSpace{}%
\AgdaOperator{\AgdaInductiveConstructor{::}}\AgdaSpace{}%
\AgdaInductiveConstructor{[]}\AgdaSymbol{)}\<%
\end{code}}

% Defines \NELIM \NRECINFTY

\begin{code}[hide]%
\>[0]\AgdaKeyword{open}\AgdaSpace{}%
\AgdaKeyword{import}\AgdaSpace{}%
\AgdaModule{Prelude}\<%
\\
\>[0]\AgdaKeyword{open}\AgdaSpace{}%
\AgdaKeyword{import}\AgdaSpace{}%
\AgdaModule{WSLN}\<%
\\
\\[\AgdaEmptyExtraSkip]%
\>[0]\AgdaKeyword{open}\AgdaSpace{}%
\AgdaKeyword{import}\AgdaSpace{}%
\AgdaModule{MLTT.Syntax}\<%
\\
\>[0]\AgdaKeyword{open}\AgdaSpace{}%
\AgdaKeyword{import}\AgdaSpace{}%
\AgdaModule{MLTT.Judgement}\<%
\\
\\[\AgdaEmptyExtraSkip]%
\>[0]\AgdaKeyword{infix}\AgdaSpace{}%
\AgdaNumber{1}\AgdaSpace{}%
\AgdaOperator{\AgdaDatatype{\AgdaUnderscore{}⊢\AgdaUnderscore{}}}\<%
\\
\>[0]\AgdaKeyword{data}\AgdaSpace{}%
\AgdaOperator{\AgdaDatatype{\AgdaUnderscore{}⊢\AgdaUnderscore{}}}\AgdaSpace{}%
\AgdaSymbol{(}\AgdaBound{Γ}\AgdaSpace{}%
\AgdaSymbol{:}\AgdaSpace{}%
\AgdaDatatype{Cx}\AgdaSymbol{)}\AgdaSpace{}%
\AgdaSymbol{:}\AgdaSpace{}%
\AgdaDatatype{Jg}\AgdaSpace{}%
\AgdaSymbol{→}\AgdaSpace{}%
\AgdaPrimitive{Set}\AgdaSpace{}%
\AgdaKeyword{where}\<%
\end{code}
\newcommand{\NELIM}{
\begin{code}%
\>[0][@{}l@{\AgdaIndent{1}}]%
\>[2]\AgdaInductiveConstructor{⊢𝐧𝐫𝐞𝐜⁻}\AgdaSpace{}%
\AgdaSymbol{:}\<%
\\
\>[2][@{}l@{\AgdaIndent{0}}]%
\>[4]\AgdaSymbol{\{}\AgdaBound{l}\AgdaSpace{}%
\AgdaSymbol{:}\AgdaSpace{}%
\AgdaDatatype{ℕ}\AgdaSymbol{\}}\<%
\\
\>[4]\AgdaSymbol{\{}\AgdaBound{C}\AgdaSpace{}%
\AgdaSymbol{:}\AgdaSpace{}%
\AgdaOperator{\AgdaDatatype{Trm[}}\AgdaSpace{}%
\AgdaNumber{1}\AgdaSpace{}%
\AgdaOperator{\AgdaDatatype{]}}\AgdaSymbol{\}}\<%
\\
\>[4]\AgdaSymbol{\{}\AgdaBound{c₀}\AgdaSpace{}%
\AgdaBound{a}\AgdaSpace{}%
\AgdaSymbol{:}\AgdaSpace{}%
\AgdaFunction{Trm}\AgdaSymbol{\}}\<%
\\
\>[4]\AgdaSymbol{\{}\AgdaBound{c₊}\AgdaSpace{}%
\AgdaSymbol{:}\AgdaSpace{}%
\AgdaOperator{\AgdaDatatype{Trm[}}\AgdaSpace{}%
\AgdaNumber{2}\AgdaSpace{}%
\AgdaOperator{\AgdaDatatype{]}}\AgdaSymbol{\}}\<%
\\
\>[4]\AgdaSymbol{\{}\AgdaBound{x}\AgdaSpace{}%
\AgdaBound{y}\AgdaSpace{}%
\AgdaSymbol{:}\AgdaSpace{}%
\AgdaFunction{𝔸}\AgdaSymbol{\}}\<%
\\
\>[4]\AgdaSymbol{(}\AgdaBound{\AgdaUnderscore{}}\AgdaSpace{}%
\AgdaSymbol{:}\AgdaSpace{}%
\AgdaBound{Γ}\AgdaSpace{}%
\AgdaOperator{\AgdaDatatype{⊢}}\AgdaSpace{}%
\AgdaBound{c₀}\AgdaSpace{}%
\AgdaOperator{\AgdaInductiveConstructor{∶}}\AgdaSpace{}%
\AgdaBound{C}\AgdaSpace{}%
\AgdaOperator{\AgdaField{[}}\AgdaSpace{}%
\AgdaInductiveConstructor{𝐳𝐞𝐫𝐨}\AgdaSpace{}%
\AgdaOperator{\AgdaField{]}}\AgdaSpace{}%
\AgdaOperator{\AgdaInductiveConstructor{⦂}}\AgdaSpace{}%
\AgdaBound{l}\AgdaSymbol{)}\<%
\\
\>[4]\AgdaSymbol{(}\AgdaBound{\AgdaUnderscore{}}\AgdaSpace{}%
\AgdaSymbol{:}\AgdaSpace{}%
\AgdaSymbol{(}\AgdaBound{Γ}\AgdaSpace{}%
\AgdaOperator{\AgdaInductiveConstructor{⨟}}\AgdaSpace{}%
\AgdaBound{x}\AgdaSpace{}%
\AgdaOperator{\AgdaInductiveConstructor{∶}}\AgdaSpace{}%
\AgdaInductiveConstructor{𝐍𝐚𝐭}\AgdaSpace{}%
\AgdaOperator{\AgdaInductiveConstructor{⦂}}\AgdaSpace{}%
\AgdaNumber{0}\AgdaSpace{}%
\AgdaOperator{\AgdaInductiveConstructor{⨟}}\AgdaSpace{}%
\AgdaBound{y}\AgdaSpace{}%
\AgdaOperator{\AgdaInductiveConstructor{∶}}\AgdaSpace{}%
\AgdaBound{C}\AgdaSpace{}%
\AgdaOperator{\AgdaField{[}}\AgdaSpace{}%
\AgdaInductiveConstructor{𝐯}\AgdaSpace{}%
\AgdaBound{x}\AgdaSpace{}%
\AgdaOperator{\AgdaField{]}}\AgdaSpace{}%
\AgdaOperator{\AgdaInductiveConstructor{⦂}}\AgdaSpace{}%
\AgdaBound{l}\AgdaSymbol{)}\AgdaSpace{}%
\AgdaOperator{\AgdaDatatype{⊢}}\AgdaSpace{}%
\AgdaBound{c₊}\AgdaSpace{}%
\AgdaOperator{\AgdaFunction{[}}\AgdaSpace{}%
\AgdaInductiveConstructor{𝐯}\AgdaSpace{}%
\AgdaBound{x}\AgdaSpace{}%
\AgdaOperator{\AgdaFunction{][}}\AgdaSpace{}%
\AgdaInductiveConstructor{𝐯}\AgdaSpace{}%
\AgdaBound{y}\AgdaSpace{}%
\AgdaOperator{\AgdaFunction{]}}\AgdaSpace{}%
\AgdaOperator{\AgdaInductiveConstructor{∶}}\AgdaSpace{}%
\AgdaBound{C}\AgdaSpace{}%
\AgdaOperator{\AgdaField{[}}\AgdaSpace{}%
\AgdaInductiveConstructor{𝐬𝐮𝐜𝐜}\AgdaSpace{}%
\AgdaSymbol{(}\AgdaInductiveConstructor{𝐯}\AgdaSpace{}%
\AgdaBound{x}\AgdaSymbol{)}\AgdaSpace{}%
\AgdaOperator{\AgdaField{]}}\AgdaSpace{}%
\AgdaOperator{\AgdaInductiveConstructor{⦂}}\AgdaSpace{}%
\AgdaBound{l}\AgdaSymbol{)}\<%
\\
\>[4]\AgdaSymbol{(}\AgdaBound{\AgdaUnderscore{}}\AgdaSpace{}%
\AgdaSymbol{:}\AgdaSpace{}%
\AgdaBound{Γ}\AgdaSpace{}%
\AgdaOperator{\AgdaDatatype{⊢}}\AgdaSpace{}%
\AgdaBound{a}\AgdaSpace{}%
\AgdaOperator{\AgdaInductiveConstructor{∶}}\AgdaSpace{}%
\AgdaInductiveConstructor{𝐍𝐚𝐭}\AgdaSpace{}%
\AgdaOperator{\AgdaInductiveConstructor{⦂}}\AgdaSpace{}%
\AgdaNumber{0}\AgdaSymbol{)}\<%
\\
\>[4]\AgdaSymbol{(}\AgdaBound{\AgdaUnderscore{}}\AgdaSpace{}%
\AgdaSymbol{:}\AgdaSpace{}%
\AgdaBound{x}\AgdaSpace{}%
\AgdaOperator{\AgdaFunction{\#}}\AgdaSpace{}%
\AgdaSymbol{(}\AgdaBound{C}\AgdaSpace{}%
\AgdaOperator{\AgdaInductiveConstructor{,}}\AgdaSpace{}%
\AgdaBound{c₊}\AgdaSymbol{))}\<%
\\
\>[4]\AgdaSymbol{(}\AgdaBound{\AgdaUnderscore{}}\AgdaSpace{}%
\AgdaSymbol{:}\AgdaSpace{}%
\AgdaBound{y}\AgdaSpace{}%
\AgdaOperator{\AgdaFunction{\#}}\AgdaSpace{}%
\AgdaBound{c₊}\AgdaSymbol{)}\<%
\\
\>[4]\AgdaSymbol{→}\AgdaSpace{}%
\AgdaComment{--------------------------------------------------------------------------------------------------------------------------}\<%
\\
\>[4]\AgdaBound{Γ}\AgdaSpace{}%
\AgdaOperator{\AgdaDatatype{⊢}}\AgdaSpace{}%
\AgdaInductiveConstructor{𝐧𝐫𝐞𝐜}\AgdaSpace{}%
\AgdaBound{C}\AgdaSpace{}%
\AgdaBound{c₀}\AgdaSpace{}%
\AgdaBound{c₊}\AgdaSpace{}%
\AgdaBound{a}\AgdaSpace{}%
\AgdaOperator{\AgdaInductiveConstructor{∶}}\AgdaSpace{}%
\AgdaBound{C}\AgdaSpace{}%
\AgdaOperator{\AgdaField{[}}\AgdaSpace{}%
\AgdaBound{a}\AgdaSpace{}%
\AgdaOperator{\AgdaField{]}}\AgdaSpace{}%
\AgdaOperator{\AgdaInductiveConstructor{⦂}}\AgdaSpace{}%
\AgdaBound{l}\<%
\end{code}}
\newcommand{\NREC}{
\begin{code}%
\>[2]\AgdaInductiveConstructor{⊢𝐧𝐫𝐞𝐜}\AgdaSpace{}%
\AgdaSymbol{:}\<%
\\
\>[2][@{}l@{\AgdaIndent{0}}]%
\>[4]\AgdaSymbol{\{}\AgdaBound{l}\AgdaSpace{}%
\AgdaSymbol{:}\AgdaSpace{}%
\AgdaDatatype{ℕ}\AgdaSymbol{\}}\<%
\\
\>[4]\AgdaSymbol{\{}\AgdaBound{C}\AgdaSpace{}%
\AgdaSymbol{:}\AgdaSpace{}%
\AgdaOperator{\AgdaDatatype{Trm[}}\AgdaSpace{}%
\AgdaNumber{1}\AgdaSpace{}%
\AgdaOperator{\AgdaDatatype{]}}\AgdaSymbol{\}}\<%
\\
\>[4]\AgdaSymbol{\{}\AgdaBound{c₀}\AgdaSpace{}%
\AgdaBound{a}\AgdaSpace{}%
\AgdaSymbol{:}\AgdaSpace{}%
\AgdaFunction{Trm}\AgdaSymbol{\}}\<%
\\
\>[4]\AgdaSymbol{\{}\AgdaBound{c₊}\AgdaSpace{}%
\AgdaSymbol{:}\AgdaSpace{}%
\AgdaOperator{\AgdaDatatype{Trm[}}\AgdaSpace{}%
\AgdaNumber{2}\AgdaSpace{}%
\AgdaOperator{\AgdaDatatype{]}}\AgdaSymbol{\}}\<%
\\
\>[4]\AgdaSymbol{(}\AgdaBound{S}\AgdaSpace{}%
\AgdaSymbol{:}\AgdaSpace{}%
\AgdaDatatype{Fset𝔸}\AgdaSymbol{)}\<%
\\
\>[4]\AgdaSymbol{(}\AgdaBound{\AgdaUnderscore{}}\AgdaSpace{}%
\AgdaSymbol{:}\AgdaSpace{}%
\AgdaBound{Γ}\AgdaSpace{}%
\AgdaOperator{\AgdaDatatype{⊢}}\AgdaSpace{}%
\AgdaBound{c₀}\AgdaSpace{}%
\AgdaOperator{\AgdaInductiveConstructor{∶}}\AgdaSpace{}%
\AgdaBound{C}\AgdaSpace{}%
\AgdaOperator{\AgdaField{[}}\AgdaSpace{}%
\AgdaInductiveConstructor{𝐳𝐞𝐫𝐨}\AgdaSpace{}%
\AgdaOperator{\AgdaField{]}}\AgdaSpace{}%
\AgdaOperator{\AgdaInductiveConstructor{⦂}}\AgdaSpace{}%
\AgdaBound{l}\AgdaSymbol{)}\<%
\\
\>[4]\AgdaSymbol{(}\AgdaBound{\AgdaUnderscore{}}\AgdaSpace{}%
\AgdaSymbol{:}\AgdaSpace{}%
\AgdaSymbol{∀}\AgdaSpace{}%
\AgdaBound{x}\AgdaSpace{}%
\AgdaBound{y}\AgdaSpace{}%
\AgdaSymbol{→}\AgdaSpace{}%
\AgdaBound{x}\AgdaSpace{}%
\AgdaOperator{\AgdaFunction{\#}}\AgdaSpace{}%
\AgdaBound{y}\AgdaSpace{}%
\AgdaOperator{\AgdaFunction{\#}}\AgdaSpace{}%
\AgdaBound{S}\AgdaSpace{}%
\AgdaSymbol{→}\<%
\\
\>[4][@{}l@{\AgdaIndent{0}}]%
\>[6]\AgdaSymbol{(}\AgdaBound{Γ}\AgdaSpace{}%
\AgdaOperator{\AgdaInductiveConstructor{⨟}}\AgdaSpace{}%
\AgdaBound{x}\AgdaSpace{}%
\AgdaOperator{\AgdaInductiveConstructor{∶}}\AgdaSpace{}%
\AgdaInductiveConstructor{𝐍𝐚𝐭}\AgdaSpace{}%
\AgdaOperator{\AgdaInductiveConstructor{⦂}}\AgdaSpace{}%
\AgdaNumber{0}\AgdaSpace{}%
\AgdaOperator{\AgdaInductiveConstructor{⨟}}\AgdaSpace{}%
\AgdaBound{y}\AgdaSpace{}%
\AgdaOperator{\AgdaInductiveConstructor{∶}}\AgdaSpace{}%
\AgdaBound{C}\AgdaSpace{}%
\AgdaOperator{\AgdaField{[}}\AgdaSpace{}%
\AgdaInductiveConstructor{𝐯}\AgdaSpace{}%
\AgdaBound{x}\AgdaSpace{}%
\AgdaOperator{\AgdaField{]}}\AgdaSpace{}%
\AgdaOperator{\AgdaInductiveConstructor{⦂}}\AgdaSpace{}%
\AgdaBound{l}\AgdaSymbol{)}\AgdaSpace{}%
\AgdaOperator{\AgdaDatatype{⊢}}\AgdaSpace{}%
\AgdaBound{c₊}\AgdaSpace{}%
\AgdaOperator{\AgdaFunction{[}}\AgdaSpace{}%
\AgdaInductiveConstructor{𝐯}\AgdaSpace{}%
\AgdaBound{x}\AgdaSpace{}%
\AgdaOperator{\AgdaFunction{][}}\AgdaSpace{}%
\AgdaInductiveConstructor{𝐯}\AgdaSpace{}%
\AgdaBound{y}\AgdaSpace{}%
\AgdaOperator{\AgdaFunction{]}}\AgdaSpace{}%
\AgdaOperator{\AgdaInductiveConstructor{∶}}\AgdaSpace{}%
\AgdaBound{C}\AgdaSpace{}%
\AgdaOperator{\AgdaField{[}}\AgdaSpace{}%
\AgdaInductiveConstructor{𝐬𝐮𝐜𝐜}\AgdaSpace{}%
\AgdaSymbol{(}\AgdaInductiveConstructor{𝐯}\AgdaSpace{}%
\AgdaBound{x}\AgdaSymbol{)}\AgdaSpace{}%
\AgdaOperator{\AgdaField{]}}\AgdaSpace{}%
\AgdaOperator{\AgdaInductiveConstructor{⦂}}\AgdaSpace{}%
\AgdaBound{l}\AgdaSymbol{)}\<%
\\
\>[4]\AgdaSymbol{(}\AgdaBound{\AgdaUnderscore{}}\AgdaSpace{}%
\AgdaSymbol{:}\AgdaSpace{}%
\AgdaBound{Γ}\AgdaSpace{}%
\AgdaOperator{\AgdaDatatype{⊢}}\AgdaSpace{}%
\AgdaBound{a}\AgdaSpace{}%
\AgdaOperator{\AgdaInductiveConstructor{∶}}\AgdaSpace{}%
\AgdaInductiveConstructor{𝐍𝐚𝐭}\AgdaSpace{}%
\AgdaOperator{\AgdaInductiveConstructor{⦂}}\AgdaSpace{}%
\AgdaNumber{0}\AgdaSymbol{)}\<%
\\
\>[4]\AgdaSymbol{(}\AgdaBound{\AgdaUnderscore{}}\AgdaSpace{}%
\AgdaSymbol{:}\AgdaSpace{}%
\AgdaSymbol{∀}\AgdaSpace{}%
\AgdaBound{x}\AgdaSpace{}%
\AgdaSymbol{→}\AgdaSpace{}%
\AgdaBound{x}\AgdaSpace{}%
\AgdaOperator{\AgdaFunction{\#}}\AgdaSpace{}%
\AgdaBound{S}\AgdaSpace{}%
\AgdaSymbol{→}\<%
\\
\>[4][@{}l@{\AgdaIndent{0}}]%
\>[6]\AgdaSymbol{(}\AgdaBound{Γ}\AgdaSpace{}%
\AgdaOperator{\AgdaInductiveConstructor{⨟}}\AgdaSpace{}%
\AgdaBound{x}\AgdaSpace{}%
\AgdaOperator{\AgdaInductiveConstructor{∶}}\AgdaSpace{}%
\AgdaInductiveConstructor{𝐍𝐚𝐭}\AgdaSpace{}%
\AgdaOperator{\AgdaInductiveConstructor{⦂}}\AgdaSpace{}%
\AgdaNumber{0}\AgdaSymbol{)}\AgdaSpace{}%
\AgdaOperator{\AgdaDatatype{⊢}}\AgdaSpace{}%
\AgdaBound{C}\AgdaSpace{}%
\AgdaOperator{\AgdaField{[}}\AgdaSpace{}%
\AgdaInductiveConstructor{𝐯}\AgdaSpace{}%
\AgdaBound{x}\AgdaSpace{}%
\AgdaOperator{\AgdaField{]}}\AgdaSpace{}%
\AgdaOperator{\AgdaFunction{⦂}}\AgdaSpace{}%
\AgdaBound{l}\AgdaSymbol{)}\AgdaSpace{}%
\AgdaComment{----\ helper\ hypothesis}\<%
\\
\>[4]\AgdaSymbol{→}\AgdaSpace{}%
\AgdaComment{------------------------------------------------------------------------------------------------------------------}\<%
\\
\>[4]\AgdaBound{Γ}\AgdaSpace{}%
\AgdaOperator{\AgdaDatatype{⊢}}\AgdaSpace{}%
\AgdaInductiveConstructor{𝐧𝐫𝐞𝐜}\AgdaSpace{}%
\AgdaBound{C}\AgdaSpace{}%
\AgdaBound{c₀}\AgdaSpace{}%
\AgdaBound{c₊}\AgdaSpace{}%
\AgdaBound{a}\AgdaSpace{}%
\AgdaOperator{\AgdaInductiveConstructor{∶}}\AgdaSpace{}%
\AgdaBound{C}\AgdaSpace{}%
\AgdaOperator{\AgdaField{[}}\AgdaSpace{}%
\AgdaBound{a}\AgdaSpace{}%
\AgdaOperator{\AgdaField{]}}\AgdaSpace{}%
\AgdaOperator{\AgdaInductiveConstructor{⦂}}\AgdaSpace{}%
\AgdaBound{l}\<%
\end{code}}

\begin{figure}
  \begin{spacing}{0.9}
    \NELIM
  \end{spacing}
  \caption{Typing rule for natural number elimination in ``exists-fresh''
  form}
  \label{fig:nrecef}
\end{figure}

\begin{example}
  \label{exa:mltt}
  \agdalink{MLTT} develops a small amount of metatheory for a version
  of intensional Martin-L\"of type theory with countably many
  universes closed under dependent function types, identity types and
  possessing a type of natural numbers. A suitable binding signature
  for this is given in \agdalink{MLTT.Syntax}, along with some
  patterns for concrete syntax. The type theory regards types as terms
  of universe type and uses just two forms of judgement:
  \begin{align*}
    \Gamma &\ent a : A : l
    &&\text{$a$ is a term of type $A$ in universe $U_l$}\\
    \Gamma &\ent a = a' : A : l
    &&\text{$a$ and $a'$ are convertible terms of type $A$ in universe $U_l$}
  \end{align*}
  Here $\Gamma$ ranges over a suitable datatype of typing contexts
  (\agdadef{MLTT.Syntax}{Cx}{Cx}), $a,a',A$ are $0$-terms and $l$ is a
  natural number, the universe level; see
  \agdalinkalt{MLTT.Judgement}{Judgement}. The rules for inductively
  generating judgements are the constructors of a datatype
  $\_\AgdaDatatype{⊢}\_$ taking as arguments a typing context and a
  judgement, defined in \agdalinkalt{MLTT.Cofinite}{Cofinite}.

  In existing research on language metatheory, even if a formal
  development uses a nameless form of representation, informal natural
  language descriptions of it in accompanying papers often resort to
  the use of explicit names for the sake of readability. As mentioned
  in the Introduction, one of the strengths of the locally nameless
  representation is its \emph{transparency}, in the sense that formal
  definitions and theorems using it do not depart radically from the
  usual informal conventions familiar to a technical audience (to
  quote the \textsc{PoplMark} challenge~\cite{PierceBP:mecmmp}). The
  same is true for the well-scoped version, with the added bonus that
  it enables Agda's type system to automatically enforce local
  closedness conditions and hence correctness of definitions that
  implicitly depend upon that condition, such as substitution.  Using
  \agdalink{WSLN} the gap between formal definitions and informal
  explanations can be made almost\footnote{Agda's concrete syntax is
    very flexible, but cannot always accommodate informal notations
    that are inherently ambiguous.}  zero with careful choice of
  concrete syntax for the binding signature (as in the bottom half of
  Figure~\ref{fig:pic}, for example). As an example of this, we
  consider the elimination typing rule for natural numbers in
  Martin-L\"of type theory.

  The signature operation for natural number elimination is called
  \AgdaInductiveConstructor{′natrec′} in \agdalink{MLTT.Syntax} and
  has arity %
  \ARITY %
  In other words, it takes four arguments, the first a $1$-term, the
  second and fourth $0$-terms and the third a $2$-term. The pattern
  declaration %
  \PATTERN %
  introduces some concrete syntax for terms with this outermost form.
  A possible typing rule for \AgdaInductiveConstructor{′natrec′} is
  shown in Figure~\ref{fig:nrecef}. It is close in form to the
  textbook presentations of \cite[page~64]{NordstromB:progmlt} or
  \cite[A.2.9]{HoTT}. Its last two hypotheses, $x\freshfor(C, c_+)$
  and $y\freshfor c_+$, are needed to ensure that when the rule is
  instantiated, any terms that get substituted for $C$ and $c_+$ do
  not have $x$ or $y$ in their support.  The downside of using
  explicit names for free resources is that they often require such
  explicit assumptions about their freshness (and such requirements
  are often not made explicit in informal accounts).  The rule of
  thumb is that in a hypothesis, whenever an $n$-term appears
  concreted with variables to get a $0$-term, those variables should
  be fresh for the $n$-term and for each other. A good alternative,
  strongly advocated by \citet{PierceB:engfm} and
  \citet{ChargueraudA:locnr}, is to use rules featuring cofinite
  quantification over names, automatically ensuring whatever freshness
  is needed. Such a rule is shown in Figure~\ref{fig:nrec} and is part
  of the definition of the type system in
  \agdalinkalt{MLTT.Cofinite}{Cofinite}. (It uses the notation
  $x\freshfor y \freshfor S$ for the type of proofs that $x$ and $y$
  are distinct names not in the finite set $S$; cf.~\cite[section
  7.1]{ChargueraudA:locnr}.)  Although rules using cofinite
  quantification appear infinitary in nature, they can be proved
  equivalent to the finitary ``exists-fresh'' formulation; see
  \agdalinkalt{MLTT.ExistsFresh}{ExistsFresh}. This is an aspect of
  the ``some/any'' nature of fresh name quantification that is a key
  feature of nominal logic~\cite[section~3.2]{PittsAM:nomsns}.

  The last hypothesis in Figure~\ref{fig:nrec} is marked
  \AgdaComment{helper} because it turns out that, modulo the other
  typing rules, it can be deleted without changing the collection of
  provable judgements. As remarked by \citet[page~23:9]{AbelA:decctt},
  including it and similar helper hypotheses in the other rules in
  \agdalinkalt{MLTT.Cofinite}{Cofinite}, aids proofs by structural
  induction (rather than by induction on a suitable notion of size of
  proofs of judgements).
\end{example}

\begin{figure}
  \begin{spacing}{0.9}
    \NREC
  \end{spacing}
  \caption{Cofinite typing rule for natural number elimination}
  \label{fig:nrec}
\end{figure}

\begin{example}[\agdalink{GST}]
  \label{exa:gst}
  The final example is a more substantial development than the
  previous two: a proof of decidability of $\beta\eta$-conversion for
  G\"odel's System T~\cite{GoedelK:ubeebn,TaitWW:intiff} using the
  method of normalization-by-evaluation (NBE) within a very weak
  intensional dependent type theory, namely the same one used for
  \agdalink{WSLN} (section~\ref{sec:agda}). Although NBE goes back to
  \citet{BergerU:inveft}, it has been studied and applied by many
  different authors in several different contexts. The development
  here is inspired by the work of \citet{FioreMP:semanbe-jv} on
  categorical analysis of simply typed NBE\footnote{The author
    gratefully acknowledges extended discussions on the topic with
    Fiore.} and by the proof using a semantics in nominal sets given
  in \cite[section~6]{PittsAM:alpsri}. To get for System T (simple
  type theory over a ground type of natural numbers) both a
  semantically-based $\beta\eta$-normalization algorithm and a proof
  of its correctness within intensional type theory we use
  (type-valued) setoids, specifically setoid-valued presheaves on a
  category of renaming functions between typing contexts.

  A binding signature for System~T is given in \agdalink{GST.Syntax}.
  It is not a sub-signature of the one in Example~\ref{exa:mltt},
  because we make all the arguments of the
  \AgdaInductiveConstructor{′natrec′} operator be $0$-terms, typed
  using function types, in order to simplify its semantics. The
  definitions of typing and $\beta\eta$-conversion occur in
  \agdalinkalt{GST.TypeSystem}{TypeSystem}. Types and terms are
  interpreted in a setoid-enriched category of presheaves; see
  \agdalinkalt{GST.Presheaf}{Presheaf},
  \agdalinkalt{GST.TypeSemantics}{TypeSemantics} and
  \agdalinkalt{GST.TermSemantics}{TermSemantics}. The type of natural
  numbers is modelled by the presheaf of normal forms
  (\agdalinkalt{GST.NormalForm}{NormalForm})\footnote{One non-standard
    aspect is that because the desired result is just decidability, we
    take a crude approach and allow neutral forms of higher type to be
    normal forms; consequently a term of higher type may have several
    different normal forms up to $\_{\eq}\_$; it would not be
    difficult to revert to the more usual notion of
    $\eta$-long-$\beta$-normal form.}  of terms of natural number
  type, equipped with the discrete setoid equality given by identity
  types $\_{\eq}\_$. The semantics of
  \AgdaInductiveConstructor{′natrec′} requires the use of the
  characteristic features of NBE, functions that \emph{reify}
  semantics as normal forms and \emph{reflect} neutral forms to
  semantics; see \agdalinkalt{GST.ReifyReflect}{ReifyReflect}. The
  semantics is proved sound for $\beta\eta$-conversion in
  \agdalinkalt{GST.Sound}{Sound}.  The categorical construction of
  Artin-Wraith glueing (see \cite{FioreMP:semanbe-jv}, for example) is
  left implicit; instead we give a direct definition of a suitable
  logical relation between syntax and semantics
  (\agdalinkalt{GST.LogicalRelation}{LogicalRelation}). Finally, the
  fundamental property of the logical relation and its relationship to
  reification and reflection are used to prove normalization
  (\agdalinkalt{GST.Normalization}{Normalization}) and the desired
  decidability result
  (\agdalinkalt{GST.DecidableConv}{DecidableConv}).

  Freshness of names is a tricky aspect of NBE; see the discussion
  in~\cite[section~2.4]{AbelA:norbed}. We take a nameful approach to
  this issue, requiring the production of fresh names at various
  points; but compared with~\cite[section~6]{PittsAM:alpsri}, we do
  not need to manufacture finitely supported semantic
  objects. Instead, the presheaf semantics keeps track of finite
  support in its Kripke-style ``possible
  worlds''~\cite{AltenkirchT:catrrf}. The emphasis \agdalink{WSLN}
  puts on renaming, rather than the name-permutations used in nominal
  techniques~\cite{PittsAM:nomsns}, is also helpful for the
  semantics. As~\citet[section~5.3]{PopescuA:renrbr} (pre-figured by
  \cite[section~4]{StoughtonA:subr}) points out, compared with the
  permutation-based approach of~\cite[section~6]{PittsAM:alpsri},
  renaming enables a much simpler proof that the semantics of
  abstractions is independent of the choice of fresh name at which
  they are concreted
  (see~\agdalinkalt{GST.TermSemantics}{TermSemantics}).
\end{example}

\begin{remark}[\textbf{Binding structures}] In Example~\ref{exa:pic}
  we mentioned the fact that the kind of binding signatures that
  parameterize \agdalink{WSLN} are unsorted. A more serious limitation
  is that they only model the simple (but useful!)  form of binding in
  which a vector of names of given length is bound in a single
  syntactic scope (cf.~$\NomBnd$ in
  Definition~\ref{def:nomter}). Languages in the wild can feature more
  elaborate forms of binding, with variable numbers of names being
  bound and with several different scopes
  simultaneously. \citet[section~7]{ChargueraudA:locnr} discusses
  locally nameless representation of several such advanced forms of
  binding and it would be useful to develop well-scoped versions of
  them. A good target would be the Ott
  tool~\cite{SewellP:ottets}. This features a meta\-language that
  supports sophisticated binding structures and the tool can compile
  specifications in it into Rocq code using locally nameless
  representation. Considering the translation
  $\den{\_}:\NomTrm\fun \TRM$ from Definition~\ref{def:transl}, one
  can say that Ott replaces $\NomTrm$ with a much more complicated
  type of nameful terms. The question is then, with what should one
  replace the type $\TRM$ of well-scoped locally nameless terms in
  order to translate Ott nameful terms?
\end{remark}

%%%%%%%%%%%%%%%%%%%%%%%%%%%%%%%%%%%%%%%%%%%%%%%%%%%%%%%%%%%%%%%%%%%%%%
% \bibliographystyle{plainnat}
% \bibliography{main}

\end{document}